\providecommand{\U}[1]{\protect\rule{.1in}{.1in}}
\numberwithin{equation}{section}
\providecommand{\U}[1]{\protect\rule{.1in}{.1in}}
\providecommand{\U}[1]{\protect\rule{.1in}{.1in}}
\theoremstyle{plain}
\newtheorem{lemma}{Lemma}
\def\Ai{\operatorname{Ai}}
\begin{document}
\title[Fundamental Laser Modes]{Fundamental Laser Modes in Paraxial Optics: \\
From Computer Algebra and Simulations \\
to Experimental Observation}
\author{Christoph~Koutschan}
\address{Johann Radon Institute for Computational and Applied Mathematics, Austrian
Academy of Sciences, Altenberger Stra\ss e 69, A-4040 Linz, Austria}
\email{christoph.koutschan@ricam.oeaw.ac.at}
\author{Erwin~Suazo}
\address{School of Mathematical and Statistical Sciences, University of Texas of Rio Grande Valley,
1201~W.~University Drive, Edinburg, TX 78541, U.S.A.}
\email{erwin.suazo@utrgv.edu}
\author{Sergei~K.~Suslov}
\address{School of Mathematical and Statistical Sciences, Arizona State University, Tempe, AZ 85287--1804, U.S.A.}
\email{sks@asu.edu}
\urladdr{http://hahn.la.asu.edu/\symbol{126}suslov/index.html}
\date{\today}
\subjclass{Primary 35C05, 35Q55; Secondary 68W30, 81Q05.}
\keywords{Paraxial wave equation, Green's function, generalized Fresnel integrals,
Airy-Hermite-Gaussian beams, Hermite-Gaussian beams, Laguerre-Gaussian beams,
Bessel-Gaussian beams, spiral beams, time-dependent Schr\"{o}dinger equation.\\
\phantom{Ga}The final publication is available at link.springer.com (Applied Physics B, DOI 10.1007/s00340-015-6231-9).}

\begin{abstract}
We study multi-parameter solutions of the inhomogeneous paraxial wave equation
in a linear and quadratic approximation which include oscillating laser beams
in a parabolic waveguide, spiral light beams, and other important families of
propagation-invariant laser modes in weakly varying media. A \textquotedblleft
smart\textquotedblright\ lens design and a similar effect of superfocusing of
particle beams in a thin monocrystal film are also discussed. In the
supplementary electronic material, we provide a computer algebra verification
of the results presented here, and of some related mathematical tools that
were stated without proofs in the literature.  We also demonstrate how
computer algebra can be used to derive some of the presented formulas
automatically, which is highly desirable as the corresponding hand
calculations are very tedious.  In numerical simulations, some of the new
solutions reveal quite exotic properties which deserve further investigation
including an experimental observation.
\end{abstract}
\maketitle

\section{Introduction}
\label{sec:intro}
In this article, we study multi-parameter laser modes in (linear) paraxial
optics with the help of computer algebra methods by using an analogy with
quantum mechanics. In particular, the Ermakov-type system's approach to
generalized quantum harmonic oscillators is utilized to paraxial, or
parabolic, wave equations in a weakly inhomogeneous lens-like medium. Although
several different techniques are widely available for integrating the
(scalar) parabolic equations (see, for instance, contemporary reviews
\cite{AbramAlievaRodrigo12}, \cite{KogelnikLi66}, \cite{SiegmanLasers}, \cite{Steinberg77},
\cite{Torre13}, \cite{YarivYeh} and the references therein), we would like to
explore a variant of the Fresnel integral and a certain generalization of the
lens transformation \cite{Mah:Sua:Sus13} combined with explicit solutions of
the Ermakov-type system introduced in \cite{Lan:Lop:Sus} as an alternative to
the traditional $ABCD$ law of paraxial optics \cite{Kogelnik65},
\cite{KogelnikLi66}.

Knowledge of solutions of this system is particularly
useful as they describe the propagation of Gaussian beams through various
optical elements and devices that are subject to a noisy environment. We
demonstrate that this approach gives a natural mathematical description of
other special laser modes in optical systems under consideration. In the
spirit of a modern \textquotedblleft doing science by a
computer\textquotedblright\ paradigm, a computer algebra derivation of all
main results is given in the form of a \textsl{Mathematica}
notebook~\cite{notebook}, with the aid of algorithmic tools presented in
\cite{Koutschan09}, \cite{Koutschan10b}, \cite{Koutschan13}.
One specific aim of this paper is to advertise and propagate the usage of
computer algebra methods in physics. Our intention is to convince the reader
that, in many instances, tedious hand calculations and longish proofs can be
replaced by \textquotedblleft pressing a button\textquotedblright.

For example, today's computer algebra systems---like \textsl{Mathematica}
which we utilize in this work---are powerful enough to verify that the
expression
\begin{align*}
  \psi(x,t) &= \sqrt{\frac{\beta }{1+4\alpha t}}\ \exp\left(  i\,\frac
    {\alpha x^{2}+\delta x-\delta^{2} t}{1+4\alpha t} +
    \frac{i\beta^{2} t}{1+4\alpha t}
    \left( \varepsilon +\frac{\beta x-2\beta \delta t}{1+4\alpha t}
    -\frac{2}{3}\frac{\beta^{4} t^{2}}{(1+4\alpha t)^{2}}\right) \right) \\
  &  \quad\times\Ai\left(  \varepsilon +\frac{\beta
    x-2\beta \delta t}{1+4\alpha t}-\frac{\beta^{4} t^{2}}{(1+4\alpha t)^{2}}\right),
\end{align*}
which corresponds to Equation~\eqref{MultiAiry} below, is a particular
solution of the time-dependent Schr\"{o}dinger equation $i\psi_{t}+\psi_{xx}=0$.
Note that this verification is done on a completely symbolic and rigorous
level, working with the symbolic parameters $\alpha$, $\beta$, $\delta$,
and~$\varepsilon$. In this instance, it suffices to enter the definition of
$\psi(x,t)$ and the expression $i\psi_t+\psi_{xx}$ into the computer and employ the
command \texttt{FullSimplify}; the result is~$0$, as expected. In other
examples one needs to assist the computer algebra system in the simplification
task, as explained in the main part of the paper.  A different approach, which
is also presented here, consists in automatically deriving a differential
equation satisfied by a given expression. In our example, this means that from
the input expression $\psi(x,t)$ above, the computer is able to derive the
equation $i\psi_t+\psi_{xx}=0$, using algorithms implemented in the
\texttt{HolonomicFunctions} package~\cite{Koutschan10b}; more details on this
example are given in Section~\ref{sec:CA}.

In general, (multi-parameter) laser modes describe the beam deviation
from the optical axis and an oblique propagation in the optical system,
which is usually not discussed in the literature.
They also correspond to the maximal kinematical invariance groups of the paraxial wave equations.
We believe that all these new features may help the reader to use this approach in numerical simulations,
as well as in the future experiment design and observations.
The computer codes for all fundamental laser modes in question are provided,
which may stimulate a new interest to this classical topic in paraxial optics.

For a more traditional approach to the paraxial wave equations and for their numerous
applications in optics and engineering, the reader can be referred to the
classical accounts \cite{AbramAlievaRodrigo12}, \cite{AbramVolBook},
\cite{AhmanNikBook}, \cite{BandresGuiz09}, \cite{BornWolf}, \cite{Brosseau09},
\cite{Eichmann71}, \cite{Fock65}, \cite{GoncharenkoBook}, \cite{KogelnikLi66},
\cite{SiegmanLasers}, \cite{Smith00}, \cite{TuruFrieb09}, \cite{Yariv3rdEd},
\cite{YarivYeh}, \cite{VainshBook}, \cite{VinRudSuxBook79},
\cite{VlasovTalanovParabolicEq}, \cite{WaltherLENSES}. The interested reader is referred
to\ \cite{AbramVolBook}, \cite{DeutschGarrison91}, \cite{Fock65},
\cite{GurevichNLIonosphere}, \cite{Laxetal75}, \cite{MarteStenholm97} for
further details on the transition from Maxwell to paraxial wave optics; see
also \cite{AbramVol89}, \cite{AbramVol89a}, \cite{KravOrlBook},
\cite{Timofeev05}, and \cite{WrightGarr87} for different aspects of
geometrical optics and paraxial approximation in the diffraction theory. Among
contemporary topics, a modern status of the concept of photon, second
quantization, photon spin, helicity, and angular momentum are discussed in \cite{BB06},
\cite{BB11}, \cite{HatzviSchechner12Suppl}, \cite{HausPan93}, \cite{Jack2nd}, \cite{Kr:Sus12},
\cite{Kr:Lan:Sus15}, \cite{Okulov08}; see also the references therein.
Gaussian light packages, that are highly localized in space and time, are investigated
in \cite{Kiselev07}, \cite{KiselevPerel}, \cite{Kiselevetal12}.

The article is organized as follows: In the next section, we discuss basics of
our approach, including the Green's Function and Fresnel Integrals, as well as
the derivation of the above-mentioned Ermakov-type system. This is followed,
in Section~\ref{sec:multi}, by a review of various multi-parameter laser modes
and some of their applications. In Section~\ref{sec:CA} we explain the
computer algebra tools employed in this article. Useful vectorial solutions of
Maxwell's equations in paraxial approximation are derived in appendices.  Our
numerical simulations, see~\cite{notebook}, reveal some quite exotic
properties of the laser modes under consideration. We believe that they
deserve an experimental verification in some spectacular $2D$ and $3D$
experiments on the beam propagation.

\section{Green's Function and Fresnel Integrals for Inhomogeneous Media}
\label{sec:basics}

This section comprises a brief survey of results established in
\cite{Cor-Sot:Lop:Sua:Sus}, \cite{Dodonov:Man'koFIAN87}, \cite{Lan:Lop:Sus},
\cite{Mah:Sua:Sus13}, \cite{Mah:Sus13}, \cite{SuazoSuslovSol}, \cite{Suslov11}
(see also the references therein for the classical accounts) which are
composed here in a compact form in order to make our presentation as
self-contained as possible. In addition, we present independent proofs in the
supplementary electronic material~\cite{notebook} for the reader's benefits.
In the context of paraxial optics, this approach, among other things, allows
one to unify various laser modes introduced and studied by different authors
(a detailed bibliography is provided below but we apologize in advance if an
important reference is missing).

\subsection{Unidimensional Case}

Recent advances in quantum mechanics of generalized harmonic oscillators can
be utilized in order to solve similar problems concerning the light
propagation in a general lens-like medium \cite{Dodonov:Man'koFIAN87},
\cite{Eichmann71}, \cite{Kogelnik65}, \cite{KogelnikLi66},
\cite{KrivoshSissak80}, \cite{KrivoshSauter92}, \cite{Mah:Sua:Sus13},
\cite{Mah:Sus13}, \cite{Sieg73}.

\subsubsection{Green's Function and Generalized Fresnel Integrals}

In the context of quantum mechanics, the $1D$ linear Schr\"{o}dinger equation
for generalized driven harmonic oscillators,%
\begin{align}
&  i\psi_{t}=-a(t) \psi_{xx}+b(t) x^{2}\psi-ic(t) x\psi_{x}\nonumber\\
&  \qquad-id(t) \psi-f(t) x\psi+ig(t) \psi_{x} \label{SchroedingerQuadratic}%
\end{align}
($a,$ $b,$ $c,$ $d,$ $f,$ and $g$ are suitable real-valued functions of the
time $t$\ only), can be solved by the integral superposition principle:%
\begin{equation}
\psi(x,t) =\int_{-\infty}^{\infty}G(x,y,t) \psi(y,0) \ dy,
\label{Superposition}%
\end{equation}
where Green's function $G(x,y,t)$ is given by
\begin{equation}
\label{Green1D}G(x,y,t) = \frac{1}{\sqrt{2\pi i\mu_{0}(t)}} \exp
\Big(i\big( \alpha_{0}(t) x^{2}+\beta_{0}(t) xy+\gamma_{0}(t) y^{2}+\delta
_{0}(t) x+\varepsilon_{0}(t) y+\kappa_{0}(t) \big)\Big)
\end{equation}
for suitable initial data $\psi(x,0) =\varphi(x) $ (see
\cite{Cor-Sot:Lop:Sua:Sus}, \cite{Lan:Lop:Sus}, \cite{Suslov11} and the
references therein for more details).

The functions $\alpha_{0}, \beta_{0}, \gamma_{0}, \delta_{0}, \varepsilon_{0}%
$, and $\kappa_{0}$ are given by \cite{Cor-Sot:Lop:Sua:Sus}, \cite{Suslov11}:%
\begin{align}
\alpha_{0}(t)  &  = \frac{1}{4a(t)}\frac{\mu_{0}^{\prime}(t)}{\mu_{0}(t)}
-\frac{d(t)}{2a(t) },\label{A0}\\
\beta_{0}(t)  &  = -\frac{\lambda(t)}{\mu_{0}(t)}, \qquad\lambda
(t)=\exp\left(  -\int_{0}^{t}\big(c(s)-2d(s)\big) \ ds\right)  ,\label{B0}\\
\gamma_{0}(t)  &  = \frac{1}{2\mu_{1}(0) }\frac{\mu_{1}(t) }{\mu_{0}(t)
}+\frac{d(0) }{2a(0) } \label{C0}%
\end{align}
and%
\begin{align}
\delta_{0}(t)  &  = \frac{\lambda(t) }{\mu_{0}(t) }\int_{0}^{t}\left(  \left(
f(s) -\frac{d(s) }{a(s) }g(s) \right)  \mu_{0}(s) +\frac{g(s) }{2a(s) }\mu
_{0}^{\prime}(s) \right)  \frac{ds}{\lambda(s) },\label{D0}\\
\varepsilon_{0}(t)  &  = -\frac{g(0) }{2a(0) }+2\int_{0}^{t}\frac{\lambda(s)
\big( a(s) \left(  f(s) -\delta_{0}^{\prime}(s) \right)  -d(s) g(s)
\big) +a(s) \delta_{0}(s) \lambda^{\prime}(s) }{\mu_{0}^{\prime}(s)
}\ ds\label{E0}\\
&  =-\frac{2a(t) \lambda(t) }{\mu_{0}^{\prime}(t) }\delta_{0}(t) +8\int%
_{0}^{t}\frac{a(s) \sigma(s) \lambda(s) }{\left(  \mu_{0}^{\prime}(s) \right)
^{2}}\mu_{0}(s)\delta_{0}(s) \ ds\nonumber\\
&  \qquad+2\int_{0}^{t}\frac{a(s) \lambda(s) } {\mu_{0}^{\prime}(s) }\left(
f(s) -\frac{d(s) }{a(s) }g(s) \right)  \ ds,\nonumber\\
\kappa_{0}(t)  &  = \int_{0}^{t}\delta_{0}(s) \big( g(s) -a(s) \delta_{0}(s)
\big) \ ds\label{F0}\\
&  = \frac{a(t) \mu_{0}(t) }{\mu_{0}^{\prime}(t) }\delta_{0}^{2}(t) -4\int%
_{0}^{t}\frac{a(s) \sigma(s) }{\left(  \mu_{0}^{\prime}(s) \right)  ^{2}%
}\left(  \mu_{0}(s) \delta_{0}(s) \right)  ^{2}\ ds\nonumber\\
&  \qquad-2\int_{0}^{t}\frac{a(s) }{\mu_{0}^{\prime}(s) } \mu_{0}(s)
\delta_{0}(s) \left(  f(s) -\frac{d(s) }{a(s) }g(s) \right)  \ ds\nonumber
\end{align}
provided that $\mu_{0}$ and $\mu_{1}$ are the standard (real-valued) solutions
of the characteristic equation:%
\begin{equation}
\mu^{\prime\prime}(t)-\tau(t) \mu^{\prime}(t)+4\sigma(t)\mu(t)=0
\label{CharEq}%
\end{equation}
with varying coefficients%
\begin{equation}
\tau(t) =\frac{a^{\prime}}{a}-2c+4d,\qquad\sigma(t) =ab-cd+d^{2}+\frac{d}%
{2}\left(  \frac{a^{\prime}}{a}-\frac{d^{\prime}}{d}\right)  ,
\label{TauSigma}%
\end{equation}
subject to the initial conditions $\mu_{0}(0) =0,$ $\mu_{0}^{\prime}(0) =2a(0)
\neq0$ and $\mu_{1}(0) \neq0,$ $\mu_{1}^{\prime}(0) =0.$ The Wronskian of
these standard solutions is given by%
\begin{equation}
W(\mu_{0},\mu_{1}) = \mu_{0}\mu_{1}^{\prime}- \mu_{0}^{\prime}\mu_{1} =
-2\mu_{1}(0) a(t) \lambda^{2}(t). \label{Wronskian}%
\end{equation}
Our coefficients (\ref{A0})--(\ref{F0}) satisfy the so-called Riccati-type
system, see the unidimensional case of Equations \eqref{SysA}--\eqref{SysF}
below with $c_{0}=0$ \cite{Lan:Lop:Sus}, subject to the following asymptotic
expansions%
\begin{align}
\alpha_{0}(t)  &  = \frac{1}{4a(0) t}-\frac{c(0) }{4a(0) }-\frac{a^{\prime}(0)
}{8a^{2}(0) }+\mathcal{O}(t) ,\label{CoeffAsymps}\\
\beta_{0}(t)  &  = -\frac{1}{2a(0) t}+\frac{a^{\prime}(0) }{4a^{2}(0)
}+\mathcal{O}(t) ,\nonumber\\
\gamma_{0}(t)  &  = \frac{1}{4a(0) t}+\frac{c(0) }{4a(0) }-\frac{a^{\prime}(0)
}{8a^{2}(0) }+\mathcal{O}(t) ,\nonumber\\
\delta_{0}(t)  &  = \frac{g(0) }{2a(0) }+\mathcal{O}(t) ,\nonumber\\
\varepsilon_{0}(t)  &  = -\frac{g(0) }{2a(0) }+ \mathcal{O}(t) ,\nonumber\\
\kappa_{0}(t)  &  = \mathcal{O}(t)\nonumber
\end{align}
as $t\rightarrow0.$ As a result,%
\begin{align}
G(x,y,t)  &  \sim\frac{1}{\sqrt{2\pi ia(0) t}}\exp\left(  i\frac{\left(
x-y\right)  ^{2}}{4a(0) t}\right) \label{GreenAsymp}\\
&  \times\exp\left(  -i\left(  \frac{a^{\prime}(0) }{8a^{2}(0) }\left(
x-y\right)  ^{2}+\frac{c(0) }{4a(0) }\left(  x^{2}-y^{2}\right)  -\frac{g(0)
}{2a(0) }\left(  x-y\right)  \right)  \right)  .\nonumber
\end{align}
Here, $f\sim g$ as $t\rightarrow0,$ if $\lim_{t\rightarrow0}\left(
f/g\right)  =$ $1.$ (For applications, say to random media
(\cite{RytovetalPrinciples}, \cite{TangMah13}), the integrals are treated in
the most general way which includes stochastic calculus; see, for example,
\cite{Oksendal00}.)

\textit{Note.} Most of these results were only stated in the original
publications because its detailed calculations are pretty messy and
time-consuming without use of algorithmic tools. In this article, for the
reader's benefits we present systematic computer algebra proofs of these
results~\cite{notebook}.

In the context of paraxial optics, when the time variable $t$ represents the
coordinate in the direction of the system optical axis, say $s,$ for wave propagation, the
expressions (\ref{Superposition})--(\ref{Green1D}) can be thought of as a
generalization of Fresnel integrals \cite{AbramVolUFN}, \cite{AbramVolBook},
\cite{BornWolf}, \cite{Dodonov:Man'koFIAN87}, \cite{GoncharenkoBook},
\cite{KrivoshSissak80}, \cite{KrivoshPetrovSisak85},
\cite{KrivoshPetrovSisak86}, \cite{KrivoshSauter92}, \cite{Mah:Sus13},
\cite{VinRudSuxBook79}. The corresponding Schr\"{o}dinger
equation~\eqref{SchroedingerQuadratic}, with $t\rightarrow s,$ can be referred
to as a generalized paraxial or parabolic wave equation \cite{Mah:Sua:Sus13},
\cite{Mah:Sus13}.

\subsubsection{Special Beam Modes in Weakly Inhomogeneous Media}

An important particular solution (generalized Hermite-Gaussian beams in
optics) of the parabolic equation \eqref{SchroedingerQuadratic} is given by
\cite{Lan:Lop:Sus}:%
\begin{equation}
\psi_{n}(x,s) =\frac{e^{i\left(  \alpha x^{2}+\delta x+\kappa\right)
+i\left(  2n+1\right)  \gamma}}{\sqrt{2^{n}n!\mu\sqrt{\pi}}}\ e^{-\left(
\beta x+\varepsilon\right)  ^{2}/2}\ H_{n}(\beta x+\varepsilon) ,
\label{WaveFunction}%
\end{equation}
where $H_{n}(x) $ are the Hermite polynomials \cite{Ni:Su:Uv}. Here,%
\begin{align}
\mu &  =\mu(0) \mu_{0}\sqrt{\beta^{4}(0) +4\left(  \alpha(0) +\gamma
_{0}\right)  ^{2}},\label{MKernelOsc}\\
\alpha &  =\alpha_{0}-\beta_{0}^{2}\frac{\alpha(0) +\gamma_{0}}{\beta^{4}(0)
+4\left(  \alpha(0) +\gamma_{0}\right)  ^{2}},\label{AKernelOsc}\\
\beta &  =-\frac{\beta(0) \beta_{0}}{\sqrt{\beta^{4}(0) +4\left(  \alpha(0)
+\gamma_{0}\right)  ^{2}}}=\frac{\beta(0) \mu(0) }{\mu(t) }\lambda(t)
,\label{BKernelOsc}\\
\gamma &  =\gamma(0) -\frac{1}{2}\arctan\frac{\beta^{2}(0) }{2\left(
\alpha(0) +\gamma_{0}\right)  },\quad a(0) >0\label{CKernelOsc}\\
\delta &  =\delta_{0}-\beta_{0}\frac{\varepsilon(0) \beta^{3}(0) +2\left(
\alpha(0) +\gamma_{0}\right)  \left(  \delta(0) +\varepsilon_{0}\right)
}{\beta^{4}(0) +4\left(  \alpha(0) +\gamma_{0}\right)  ^{2}}%
,\label{DKernelOsc}\\
\varepsilon &  =\frac{2\varepsilon(0) \left(  \alpha(0) +\gamma_{0}\right)
-\beta(0) \left(  \delta(0) +\varepsilon_{0}\right)  }{\sqrt{\beta^{4}(0)
+4\left(  \alpha(0) +\gamma_{0}\right)  ^{2}}},\label{EKernelOsc}\\
\kappa &  =\kappa(0) +\kappa_{0}-\varepsilon(0) \beta^{3}(0) \frac{\delta(0)
+\varepsilon_{0}}{\beta^{4}(0) +4\left(  \alpha(0) +\gamma_{0}\right)  ^{2}%
}\label{FKernelOsc}\\
&  \qquad+\left(  \alpha(0) +\gamma_{0}\right)  \frac{\varepsilon^{2}(0)
\beta^{2}(0) -\left(  \delta(0) +\varepsilon_{0}\right)  ^{2}}{\beta^{4}(0)
+4\left(  \alpha(0) +\gamma_{0}\right)  ^{2}}\nonumber
\end{align}
in terms of the fundamental solution subject to the arbitrary real or
complex-valued initial data $\mu(0) \neq0,$ $\alpha(0) ,$ $\beta(0) \neq0,$
$\gamma(0) ,$ $\delta(0) ,$ $\varepsilon(0) ,$ $\kappa(0)$. This solution was
obtained in \cite{Lan:Lop:Sus} by an integral evaluation and its direct
verification by substitution is provided in~\cite{notebook}.

\textit{Note.} Equations (\ref{AKernelOsc})--(\ref{FKernelOsc}) solve the
one-dimensional case of the Ermakov-type system (\ref{SysA})--(\ref{SysF})
below with $c_{0}=1$ \cite{Lan:Lop:Sus}; for the complex form of these
solutions, see \cite{Kr:Sus12}; their verification is provided
in~\cite{notebook}.

By the superposition principle, (orthonormal) solutions (\ref{WaveFunction})
can be used for the corresponding eigenfunction expansions in the case of
real-valued initial data. In our approach, the functions $f$ and $g$ are
treated as two stochastic processes and Equations~\eqref{D0}--\eqref{F0} and
(\ref{DKernelOsc})--(\ref{FKernelOsc}) can be analyzed by statistical methods
\cite{AhDyaChirBook81}, \cite{RytovetalPrinciples} (which may include random
initial data).

A solution in terms of Airy functions \cite{Fock65} (generalized Airy beams)
has the form \cite{Mah:Sus12}, \cite{Mah:Sus13}:%
\begin{equation}
\psi(x,s)=\frac{e^{i\left(  \alpha x^{2}+\delta x+\kappa\right)  -i\left(
\beta x+\varepsilon-2\gamma^{2}/3\right)  \gamma}}{\sqrt{\mu}}%
\ \Ai(\beta x+\varepsilon-\gamma^{2}), \label{AiryOsc1D}%
\end{equation}
where
\begin{align}
\mu &  =2\mu(0)\mu_{0}\left(  \alpha(0)+\gamma_{0}\right)  ,\label{MKernel}\\
\alpha &  =\alpha_{0}-\frac{\beta_{0}^{2}}{4\left(  \alpha(0)+\gamma
_{0}\right)  },\label{AKernel}\\
\beta &  =-\frac{\beta(0)\beta_{0}}{2\left(  \alpha(0)+\gamma_{0}\right)
}=\frac{\beta(0)\mu(0)}{\mu}\lambda,\label{BKernel}\\
\gamma &  =\gamma(0)-\frac{\beta^{2}(0)}{4\left(  \alpha(0)+\gamma_{0}\right)
},\label{CKernel}\\
\delta &  =\delta_{0}-\frac{\beta_{0}\left(  \delta(0)+\varepsilon_{0}\right)
}{2\left(  \alpha(0)+\gamma_{0}\right)  },\label{DKernel}\\
\varepsilon &  =\varepsilon(0)-\frac{\beta(0)\left(  \delta(0)+\varepsilon
_{0}\right)  }{2\left(  \alpha(0)+\gamma_{0}\right)  },\label{EKernel}\\
\kappa &  =\kappa(0)+\kappa_{0}-\frac{\left(  \delta(0)+\varepsilon
_{0}\right)  ^{2}}{4\left(  \alpha(0)+\gamma_{0}\right)  }. \label{FKernel}%
\end{align}
A direct verification is given in~\cite{notebook} for the reader's benefits.
Important special cases of Airy beams were found in \cite{BerryBalazs79},
\cite{SiviloglouChris07}, and \cite{Siviloglouetal07} (see also
\cite{Mah:Sus12}, \cite{Torre13} and the references therein; more details are
given in Section~\ref{sec:AiryBeams} below).

\textit{Note.} Equations (\ref{MKernel})--(\ref{FKernel}) solve the
one-dimensional case of the Riccati-type system (\ref{SysA})--(\ref{SysF})
below with $c_{0}=0$ \cite{Lan:Lop:Sus}; a proof is provided
in~\cite{notebook}. Moreover, in view of uniqueness of the Cauchy initial
value problem for Schr\"{o}dinger equation (\ref{SchroedingerQuadratic}), the
use of Green's function (\ref{Green1D}) in Equation (\ref{Superposition})
results in an integral evaluation for Airy functions which may have an
independent value.

In general, one may interpret solutions (\ref{WaveFunction}%
)--(\ref{FKernelOsc}) and (\ref{AiryOsc1D})--(\ref{FKernel}), relating the
initial and final parameters of the corresponding laser modes propagating in a
certain element of optical device, as an alternative to the $ABCD$ law which
follows from the analogy between the laws for laser beams and the laws obeyed
by the spherical waves in geometrical optics \cite{Kogelnik65},
\cite{KogelnikLi66}. As one can see, the corresponding composition laws will include
a variant of linear fractional transformation when $\alpha(0)\neq0.$
A numerical example is discussed in Section~\ref{sec:smart}.
(Further details of this interpretation are left to the reader.)

\subsection{Two-Dimensional Case}

For the laser beam propagation in optics, the (co-dimensional) $2D$ case (with or without
cylindrical symmetry) is of a great importance.

\subsubsection{Separation of Variables}

In the paraxial approximation, a $2D$ coherent light field in a general
lens-like medium with coordinates $(\boldsymbol{r},s)=(x,y,s)$ can be
described by the following equation for the complex field amplitude:%
\begin{equation}
i\psi_{s}(\boldsymbol{r},s)=H\psi(\boldsymbol{r},s),\qquad H=H_{1}%
(x,s)+H_{2}(y,s), \label{2DLinearSchroedingerEquation}%
\end{equation}
where $H_{1,2}$ are the Hamiltonians in $x$ and $y$ directions similar to one
in (\ref{SchroedingerQuadratic}) but, in a general inhomogeneous medium model,
with two different sets of suitable functions $a_{1,2}(s),$ $b_{1,2}(s),$
$c_{1,2}(s),$ $d_{1,2}(s),$ $f_{1,2}(s),$ and $g_{1,2}(s).$ (We assume, for
simplicity, that the nondiagonal terms are eliminated by passing to normal
coordinates.) The kernel of generalized Fresnel integral can be obtained as
the product \cite{Mah:Sus13}:%
\begin{equation}
G(\boldsymbol{r},\boldsymbol{r}^{\prime},s)=G_{1}(x,\xi,s)G_{2}(y,\eta,s),
\label{2DGreenFunction}%
\end{equation}
where the kernels $G_{1,2}$ are given by (\ref{Green1D}) with a simple change
of notation: the coefficients $\alpha_{0}^{(1,2)},$ $\beta_{0}^{(1,2)},$
$\gamma_{0}^{(1,2)},$ $\delta_{0}^{(1,2)},$ $\varepsilon_{0}^{(1,2)},$
$\kappa_{0}^{(1,2)}$ are defined, in general, in terms of two sets of the
fundamental solutions (\ref{A0})--(\ref{F0}) with $t\leftrightarrow s.$ The
solution of the corresponding boundary value problem can be found by the
integral superposition principle ($2D$ generalized Fresnel integral):%
\begin{equation}
\psi(\boldsymbol{r},s)=\iint_{\mathbb{R}^{2}}G(\boldsymbol{r},\boldsymbol{r}%
^{\prime},s)\psi(\boldsymbol{r}^{\prime},0)\ d\boldsymbol{r}^{\prime}
\label{2DFresnelIntegral}%
\end{equation}
for suitable initial data. (This integral determines the spatial beam
evolution during the Fresnel diffraction.)

The corresponding $2D$ Hermite-Gaussian beams have the form
\begin{align}
&  \psi_{nm}(\boldsymbol{r},s)=\frac{e^{i\left(  \kappa_{1}+\kappa_{2}\right)
}}{\sqrt{2^{n+m}n!m!\mu^{(1)}\mu^{(2)}\pi}}e^{i\left(  \alpha_{1}x^{2}%
+\delta_{1}x\right)  +i\left(  2n+1\right)  \gamma_{1}}\ e^{i\left(
\alpha_{2}y^{2}+\delta_{2}y\right)  +i\left(  2m+1\right)  \gamma_{2}%
}\label{2DGaussHermite}\\
&  \quad\quad\qquad\times e^{-\left(  \beta_{1}x+\varepsilon_{1}\right)
^{2}/2-\left(  \beta_{2}y+\varepsilon_{2}\right)  ^{2}/2}\ \ H_{n}\left(
\beta_{1}x+\varepsilon_{1}\right)  H_{m}\left(  \beta_{2}y+\varepsilon
_{2}\right) \nonumber
\end{align}
in terms of solutions of the Ermakov-type system (\ref{SysA})--(\ref{SysF})
below with $c_{0}=1,$ which are known in quadratures \cite{Lan:Lop:Sus} (see
also (\ref{hhA})--(\ref{hhK}) for an important explicit special case).
Equations \eqref{MKernelOsc}--\eqref{FKernelOsc} are valid with a similar
change of notation for given initial data $\mu^{(1,2)}(0),$ $\alpha_{1,2}(0),$
$\beta_{1,2}(0)\neq0,$ $\gamma_{1,2}(0),$ $\delta_{1,2}(0),$ $\varepsilon
_{1,2}(0),$ $\kappa_{1,2}(0)$ (see also \cite{Agrawaletal74},
\cite{AhDyaChirBook81}, \cite{GoncharenkoBook}, \cite{Sieg73},
\cite{Yariv3rdEd}, \cite{YarivYeh}, \cite{VinRudSuxBook79} for various special cases).

In general, by the separation of variables, the product of any two $1D$
solutions (\ref{WaveFunction}) and (\ref{AiryOsc1D}), say%
\begin{equation}
\psi_{n}(\boldsymbol{r},s)=\psi_{n}(x,s)\psi(y,s), \label{AiryHermite}%
\end{equation}
gives an important class of $2D$ solutions (Airy-Hermite-Gaussian beams in a
weakly inhomogeneous medium; see also \cite{GuGbur10}, \cite{Gu13},
\cite{GburVisser03}).

\subsubsection{Cylindrical Symmetry}

If $a_{1}(s)=a_{2}(s)=a(s),$ $b_{1}(s)=b_{2}(s)=b(s),$ $c_{1}(s)=c_{2}%
(s)=c(s),$ $d_{1}(s)=d_{2}(s)=d(s),$ the parabolic equation,%
\begin{align}
&  iA_{s}=-a\left(  A_{xx}+A_{yy}\right)  +b\left(  x^{2}+y^{2}\right)
A-ic\left(  xA_{x}+yA_{y}\right) \label{2DSchroedingerNL}\\
&  \qquad-2idA-\left(  xf_{1}+yf_{2}\right)  A+i\left(  g_{1}A_{x}+g_{2}%
A_{y}\right)  ,\nonumber
\end{align}
where $f_{1,2}(s)$ and $g_{1,2}(s)$ are real-valued functions of a coordinate
in the direction of the optical axis~$s$ related to the wave propagation, can
be reduced to the standard forms%
\begin{equation}
-i\chi_{\tau}+\chi_{\xi\xi}+\chi_{\eta\eta}=c_{0}\left(  \xi^{2}+\eta
^{2}\right)  \chi,\qquad\left(  c_{0}=0,1\right)
\label{2DSchroedingerTransformed}%
\end{equation}
by the following ansatz%
\begin{equation}
A=\mu^{-1}e^{i\left(  \alpha\left(  x^{2}+y^{2}\right)  +\delta_{1}%
x+\delta_{2}y+\kappa_{1}+\kappa_{2}\right)  }\ \chi(\xi,\eta,\tau)
\label{Ansatz2DParabolic}%
\end{equation}
(see Lemma~1 of \cite{Mah:Sus13}, which is reproduced below in our notation
with an independent computer algebra proof for the reader's convenience).

\begin{lemma}
\label{lem:nonlinear} The nonlinear parabolic equation,%
\begin{align}
iA_{s}  &  =-a\left(  A_{xx}+\psi_{yy}\right)  +b\left(  x^{2}+y^{2}\right)
A-ic\left(  xA_{x}+yA_{y}\right)  -2idA\label{NonlinearParabolic2D}\\
&  -\left(  xf_{1}+yf_{2}\right)  A+i\left(  g_{1}A_{x}+g_{2}A_{y}\right)
+h\left\vert A\right\vert ^{p}A,\nonumber
\end{align}
where $a,$ $b,$ $c,$ $d,$ $f_{1,2}$ and $g_{1,2}$ are real-valued functions of
$s,$ can be transformed to%
\begin{equation}
-i\chi_{\tau}+\chi_{\xi\xi}+\chi_{\eta\eta}=c_{0}\left(  \xi^{2}+\eta
^{2}\right)  \chi+h_{0}\left\vert \chi\right\vert ^{p}\chi\qquad\left(
c_{0}=0,1\right)  \label{2DShroedingerNLTransformed}%
\end{equation}
by the ansatz (\ref{Ansatz2DParabolic}), where $\xi=\beta(s)x+\varepsilon
_{1}(s),$ $\eta=\beta(s)y+\varepsilon_{2}(s),$ $\tau=\gamma(s),$
$h=h_{0}a\beta^{2}\mu^{p}$\ $(h_{0}$ is a constant), provided that%
\begin{align}
\frac{d\alpha}{ds}+b+2c\alpha+4a\alpha^{2}  &  =c_{0}a\beta^{4},\label{SysA}\\
\frac{d\beta}{ds}+(c+4a\alpha)\beta &  =0,\label{SysB}\\
\frac{d\gamma}{ds}+a\beta^{2}  &  =0,\label{SysC}\\
\frac{d\delta_{1,2}}{ds}+(c+4a\alpha)\delta_{1,2}  &  =f_{1,2}+2g\alpha
+2c_{0}a\beta^{3}\varepsilon_{1,2},\label{SysD}\\
\frac{d\varepsilon_{1,2}}{ds}  &  =(g-2a\delta_{1,2})\beta,\label{SysE}\\
\frac{d\kappa_{1,2}}{ds}  &  =g\delta_{1,2}-a\delta_{1,2}^{2}+c_{0}a\beta
^{2}\varepsilon_{1,2}^{2}. \label{SysF}%
\end{align}
Here,%
\begin{equation}
\alpha=\frac{1}{4a}\frac{\mu^{\prime}}{\mu}-\frac{d}{2a} \label{Alpha}%
\end{equation}
and solutions of the system (\ref{SysA})--(\ref{SysF}) are given by
(\ref{MKernel})--(\ref{FKernel}) and (\ref{MKernelOsc})--(\ref{FKernelOsc})
for $c_{0}=0$ and $c_{0}=1,$ respectively.
\end{lemma}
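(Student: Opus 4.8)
The plan is to prove the lemma by \emph{direct substitution}: insert the ansatz \eqref{Ansatz2DParabolic} into \eqref{NonlinearParabolic2D}, carry out all differentiations by the chain rule, divide through by the common nonvanishing factor $\mu^{-1}e^{iS}$ with $S=\alpha(x^{2}+y^{2})+\delta_{1}x+\delta_{2}y+\kappa_{1}+\kappa_{2}$, then reorganize the resulting identity as a finite sum of terms, each a polynomial in $x,y$ multiplied by one of $\chi,\chi_{\xi},\chi_{\eta},\chi_{\xi\xi},\chi_{\eta\eta},\chi_{\tau}$, and finally divide by the normalizing factor $a\beta^{2}$ and compare, monomial by monomial, with the target equation \eqref{2DShroedingerNLTransformed}. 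This comparison produces exactly the system \eqref{SysA}--\eqref{SysF}. Since the computation is voluminous but entirely mechanical, in the spirit of this paper the cleanest route is to let a computer algebra system perform the expansion and the coefficient matching; below we indicate which batch of coefficients forces which relation.

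Concretely, one first records $A_{x}=\mu^{-1}e^{iS}\bigl(i(2\alpha x+\delta_{1})\chi+\beta\chi_{\xi}\bigr)$, $A_{xx}=\mu^{-1}e^{iS}\bigl(-(2\alpha x+\delta_{1})^{2}\chi+2i\alpha\chi+2i(2\alpha x+\delta_{1})\beta\chi_{\xi}+\beta^{2}\chi_{\xi\xi}\bigr)$, the analogous formulas in $y$, and $A_{s}=\mu^{-1}e^{iS}\bigl(-(\mu'/\mu)\chi+iS_{s}\chi+(\beta'x+\varepsilon_{1}')\chi_{\xi}+(\beta'y+\varepsilon_{2}')\chi_{\eta}+\gamma'\chi_{\tau}\bigr)$ with $S_{s}=\alpha'(x^{2}+y^{2})+\delta_{1}'x+\delta_{2}'y+\kappa_{1}'+\kappa_{2}'$, using that $\xi=\beta x+\varepsilon_{1}$, $\eta=\beta y+\varepsilon_{2}$, $\tau=\gamma$ depend on $s$ alone. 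Substituting and dividing by $\mu^{-1}e^{iS}$ turns \eqref{NonlinearParabolic2D} into a polynomial identity in $x,y$ whose coefficients involve $\chi$ and its derivatives; the nonlinear term $h\lvert A\rvert^{p}A$ becomes $h_{0}a\beta^{2}\lvert\chi\rvert^{p}\chi$ (hence $h_{0}\lvert\chi\rvert^{p}\chi$ after the division) precisely because $\lvert e^{iS}\rvert=1$ and $h=h_{0}a\beta^{2}\mu^{p}$.

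Matching coefficients then goes as follows. The $\chi_{\xi\xi}$- and $\chi_{\eta\eta}$-coefficients are both $a\beta^{2}$, which fixes the overall normalization; the $\chi_{\tau}$-coefficient $i\gamma'$ matches the target $-i\chi_{\tau}$ exactly when $\gamma'+a\beta^{2}=0$, i.e.\ \eqref{SysC}. Requiring the $\chi_{\xi}$-coefficient to vanish (the standard equation has no first-order term) splits into its $x$-linear part, $\beta'+(c+4a\alpha)\beta=0$, i.e.\ \eqref{SysB}, and its $x$-independent part, $\varepsilon_{1}'=(g_{1}-2a\delta_{1})\beta$, i.e.\ \eqref{SysE}; the $\chi_{\eta}$-coefficient gives the same $\beta$-equation together with the $\varepsilon_{2}$-equation. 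In the coefficient of $\chi$: the quadratic part $-\bigl(\alpha'+4a\alpha^{2}+b+2c\alpha\bigr)(x^{2}+y^{2})$ must reproduce $-c_{0}\beta^{2}(x^{2}+y^{2})$ after the division, which is \eqref{SysA}; the $x,y$-linear part yields \eqref{SysD}; the $x,y$-independent part yields \eqref{SysF}; and the purely imaginary remainder $\bigl(4ia\alpha-i\mu'/\mu+2id\bigr)\chi$, assembled from $\partial_{s}\mu^{-1}$, from $a(A_{xx}+A_{yy})$, and from the term $-2idA$, vanishes identically because $4a\alpha=\mu'/\mu-2d$ by the definition \eqref{Alpha} of $\alpha$. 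This last cancellation is the structural reason for the prefactor $\mu^{-1}$ and for \eqref{Alpha}.

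The main obstacle is not conceptual but organizational: tracking the roughly two dozen contributions without sign or bookkeeping errors, which is exactly why the verification is delegated to \textsl{Mathematica} in~\cite{notebook}. One point worth isolating is that the $\chi_{\xi}$- and $\chi_{\eta}$-terms must both vanish under \emph{one and the same} function $\beta$; this is consistent only because the hypothesis $a_{1}=a_{2}=a$, $b_{1}=b_{2}=b$, $c_{1}=c_{2}=c$, $d_{1}=d_{2}=d$ makes \eqref{SysB} identical in the two transverse directions, so cylindrical symmetry is precisely what the reduction requires. Finally, the assertion that \eqref{MKernel}--\eqref{FKernel} and \eqref{MKernelOsc}--\eqref{FKernelOsc} solve \eqref{SysA}--\eqref{SysF} for $c_{0}=0$ and $c_{0}=1$ reduces, once \eqref{Alpha} recasts \eqref{SysA} as the characteristic equation \eqref{CharEq} with coefficients \eqref{TauSigma}, to substituting those closed forms into the six equations and simplifying---again a routine, if lengthy, check carried out in the supplement.
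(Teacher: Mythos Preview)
Your proposal is correct and follows essentially the same approach as the paper: the paper's proof consists entirely of a reference to the computer-algebra verification in the supplementary \textsl{Mathematica} notebook, and what you outline---direct substitution of the ansatz, chain-rule expansion, and coefficient matching in $x,y$ against $\chi,\chi_\xi,\chi_\eta,\chi_{\xi\xi},\chi_{\eta\eta},\chi_\tau$---is precisely the computation that notebook performs. Your write-up in fact supplies more explanatory detail than the paper itself, correctly isolating which coefficient forces which of \eqref{SysA}--\eqref{SysF} and why \eqref{Alpha} is exactly the condition killing the residual imaginary $\chi$-term.
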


\begin{proof}
For a computer algebra derivation, see the \textsl{Mathematica}
notebook~\cite{notebook}, which is available as a supplementary material on
the article's website.
\end{proof}

In principle, our substitution (\ref{Ansatz2DParabolic}) can be thought of as
a generalized lens transformation in nonlinear paraxial optics
(cf.~\cite{KuzTur85}, \cite{Niederer72}, \cite{Niederer73}, \cite{Talanov70},
\cite{Tao09}, \cite{VlasovTalanovParabolicEq}). De facto, we have found a
\textquotedblleft proper\textquotedblright\ system of spatial coordinates
$(\xi,\eta,\tau)$ which automatically takes into consideration
\textquotedblleft imperfections\textquotedblright\ of initial data and turbid
medium in linear and quadratic approximations.

\textit{Note.} An algorithmic proof of the one-dimensional version of this lemma is
given in \cite{Koutschan11}.

\section{Multi-parameter Laser Beams and Their Special Cases}
\label{sec:multi}

With the help of the generalized lens transformation described in
Lemma~\ref{lem:nonlinear} and available explicit solutions from quantum
mechanics one can analyze, in a unified form, a large class of multi-parameter
modes for the corresponding linear parabolic wave equations in $1D$ and $2D$
weakly inhomogeneous media which are objects of interest in paraxial optics.
Some of these solutions have been already demonstrated in recent laser experiments but others,
which have quite exotic and spectacular properties according to our numerical simulations,
yet deserve an observation.

\subsection{Airy Beams}

\label{sec:AiryBeams}

In quantum mechanics, the time-dependent Schr\"{o}dinger equation for a free
particle (or the normalized paraxial wave equation in optics
\cite{Dodonov:Man'koFIAN87}, \cite{SiviloglouChris07} also known as the
parabolic equation \cite{Fock65}, \cite{VlasovTalanovParabolicEq}),%
\begin{equation}
i\psi_{t}+\psi_{xx}=0, \label{FreeSchroedinger}%
\end{equation}
by the following ansatz%
\begin{equation}
\psi(x,t) =e^{i(x-2t^{2}/3)t}\ F(x-t^{2}) \label{FreeSchroedingerSubstitution}%
\end{equation}
can be reduced to the Airy equation:%
\begin{equation}
F^{\prime\prime}=zF,\quad z=x-t^{2}, \label{AiryEquation}%
\end{equation}
whose bounded solutions are the Airy functions $F=k\Ai(z)$ (up
to a multiplicative constant~$k$) with well-known asymptotics as
$z\rightarrow\pm\infty$ \cite{Fock65}, \cite{Olver}.

The nonspreading Airy beams, which accelerate without any external force, were
introduced by Berry and Balazs \cite{BerryBalazs79} (see also \cite{Besetal94}%
, \cite{DanilovKuznetsovSmorodinskii80}, \cite{Greenberg80}, and
\cite{UnnikishnanRau96} for further exploration of different aspects of this
result). These nonspreading and freely accelerating wave packets have been
demonstrated in both one- and two-dimensional configurations as
quasi-diffraction-free optical beams \cite{SiviloglouChris07},
\cite{Siviloglouetal07} thus generating a considerable interest to this
phenomenon (see \cite{Abd10}, \cite{AbramRaz11}, \cite{AmentPolynkinMoloney11}%
, \cite{Bandres09}, \cite{BaddresVega07}, \cite{Bandresetal13},
\cite{BekSeg11}, \cite{BesierisShaarawi07}, \cite{Chenetal10},
\cite{Chenetal11}, \cite{Chongetal10}, \cite{Davisetal08}, \cite{Davisetal09},
\cite{Deng11}, \cite{KamSegChris11}, \cite{KasparianWolf09},
\cite{Lottietal11}, \cite{PangGburVisser11}, \cite{Polyn09}, \cite{RudMar11},
\cite{Torre13} and the references therein).

Equation \eqref{FreeSchroedinger} possesses a nontrivial symmetry
\cite{Niederer72}:%
\begin{equation}
i\psi_{t}+\psi_{xx}=0\quad\rightarrow\quad i\chi_{\tau}+\chi_{\xi\xi}=0,
\label{SchroedingerGroupEquation}%
\end{equation}
under the following transformation:%
\begin{align}
&  \psi(x,t)=\sqrt{\frac{\beta(0)}{1+4\alpha(0)t}}\ \exp i\left(  \frac
{\alpha(0)x^{2}+\delta(0)x-\delta^{2}(0)t}{1+4\alpha(0)t}+\kappa(0)\right)
\label{SchroedingerGroup}\\
&  \quad\quad\quad\times\chi\left(  \frac{\beta(0)x-2\beta(0)\delta
(0)t}{1+4\alpha(0)t}+\varepsilon(0),\ \frac{\beta^{2}(0)t}{1+4\alpha
(0)t}-\gamma(0)\right)  ,\nonumber
\end{align}
which is usually called the Schr\"{o}dinger group, and/or the maximum (known)
kinematical invariance group of the free Schr\"{o}dinger equation (see also
\cite{BandresGuiz09}, \cite{BoySharpWint},
\cite{DanilovKuznetsovSmorodinskii80}, \cite{Lop:Sus:VegaGroup},
\cite{LopSusVegaHarm}, \cite{Miller77}, \cite{Niederer73}, \cite{Torre13} and
the references therein; the subgroups and their invariants are discussed in
\cite{BoySharpWint}, \cite{Mah:Sus12}; the group parameters $\alpha(0),$
$\beta(0),$ $\gamma(0)=0,$ $\delta(0),$ $\varepsilon(0),$ and $\kappa(0)=0$
are chosen as initial data of the corresponding Riccati-type system
\cite{Lop:Sus:VegaGroup}).

As a result, in paraxial optics, the multi-parameter Airy modes are given by%
\begin{align}
B(x,s)  &  =\sqrt{\frac{\beta(0)}{1+4\alpha(0)s}}\ \exp\left(  i\,\frac
{\alpha(0)x^{2}+\delta(0)x-\delta^{2}(0)s}{1+4\alpha(0)s}\right)
\label{MultiAiry}\\
&  \times\exp\left(  \frac{i\beta^{2}(0)s}{1+4\alpha(0)s}\left(
\varepsilon(0)+\frac{\beta(0)x-2\beta(0)\delta(0)s}{1+4\alpha(0)s}-\frac{2}%
{3}\frac{\beta^{4}(0)s^{2}}{(1+4\alpha(0)s)^{2}}\right)  \right) \nonumber\\
&  \quad\times\Ai\left(  \varepsilon(0)+\frac{\beta
(0)x-2\beta(0)\delta(0)s}{1+4\alpha(0)s}-\frac{\beta^{4}(0)s^{2}}%
{(1+4\alpha(0)s)^{2}}\right) \nonumber
\end{align}
as particular solutions of the parabolic equation $iB_{s}%
+B_{xx}=0.$ (One can choose $\gamma(0)=$ $\kappa(0)=0$ in the explicit action
(\ref{SchroedingerGroup}) of the Schr\"{o}dinger group without loss of
generality.) The nonspreading case of Berry and Balazs \cite{BerryBalazs79}
occurs when $\alpha(0)=0$ in our notation. Other important special cases are
discussed in \cite{Mah:Sus12}, \cite{SiviloglouChris07},
\cite{Siviloglouetal07} (see also the references therein). The direct
verification by substitution and a computer algebra derivation of the
parabolic equation for the beams~\eqref{MultiAiry} is given
in~\cite{notebook} (see Section~\ref{sec:CA} for more details). Although
nowadays Airy and related beams are well documented \cite{Abd10},
\cite{Bandresetal13}, \cite{BesierisShaarawi14}, \cite{Chongetal10}, \cite{Davisetal08},
\cite{Davisetal09}, \cite{Lottietal11}, \cite{SiviloglouChris07},
\cite{Siviloglouetal07}, \cite{Torre10},
Figure~\ref{fig.Airy} represents an example of configuration which
yet deserves the experimental observation. Our
solution resembles, in the linear approximation, main features of rogue waves
\cite{Khetal09}, \cite{Mah:Sus12}, \cite{Smith76} 
(a simple animation is given in~\cite{notebook}).

\begin{figure}[htbp]
\centering \includegraphics[width=0.7\textwidth]{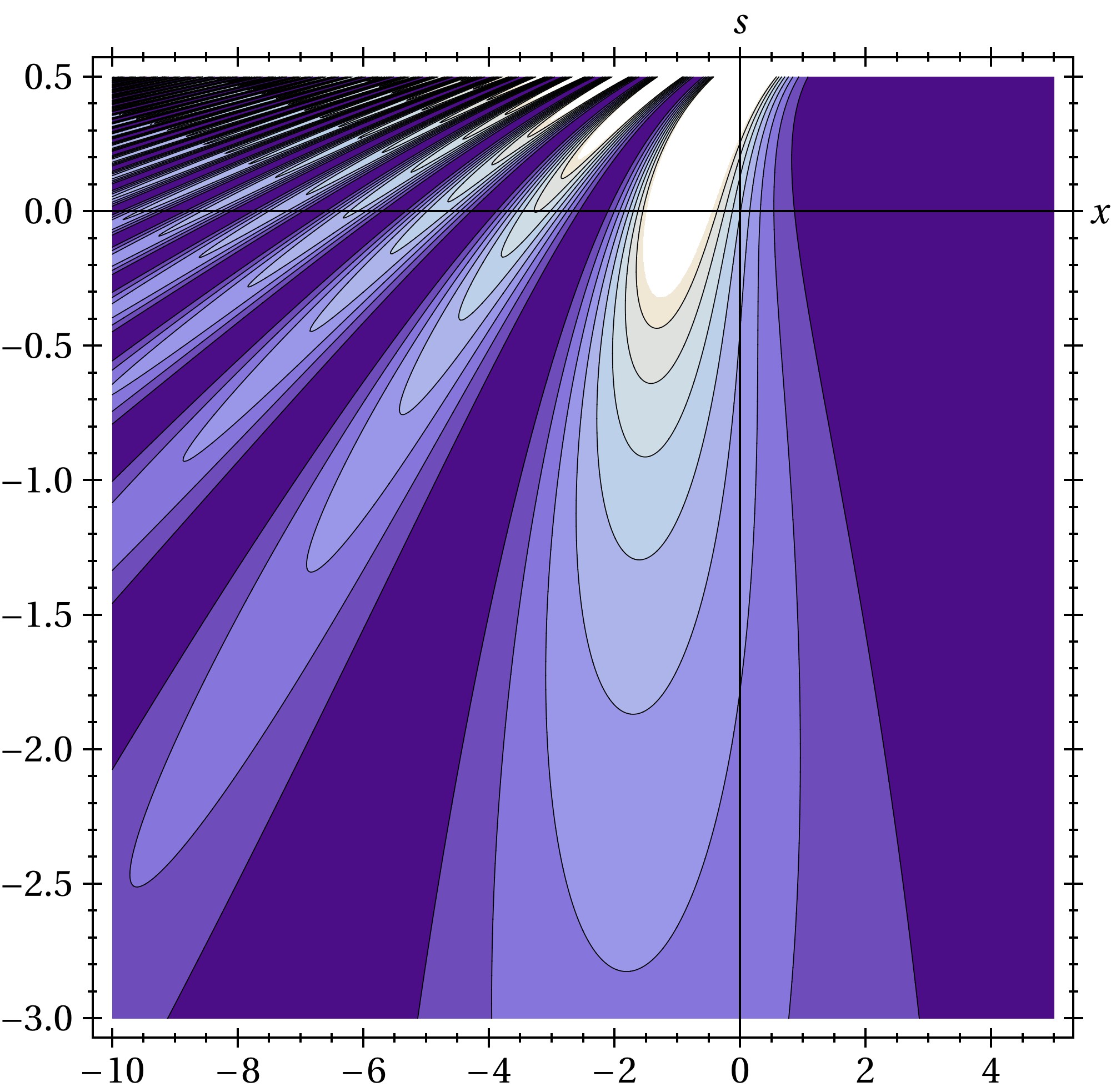}
\caption{Levels of intensity of the Airy mode \eqref{MultiAiry}
when $\alpha(0)=-1/4,$ $\beta(0)=1,$ and $\delta(0)=\varepsilon(0)=0.$}
\label{fig.Airy}
\end{figure}

\subsection{Oscillating and Breathing Her\-mite-Gaussian Beams}

For a $1D$ inhomogeneous paraxial wave equation with quadratic refractive
index (a lens-like medium \cite{Kogelnik65}, \cite{YarivYeh}),%
\begin{equation}
2iA_{s}+A_{xx}-x^{2}A=0, \label{Schroudinger}%
\end{equation}
an important multi-parameter family of particular solutions can be presented
as follows \cite{KrySusVegaMinimum}, \cite{Mah:Sua:Sus13}:%
\begin{equation}
\ A_{n}(x,s)=e^{i\left(  \alpha x^{2}+\delta x+\kappa\right)  +i\left(
2n+1\right)  \gamma}\sqrt{\frac{\beta}{2^{n}n!\sqrt{\pi}}}e^{-\left(  \beta
x+\varepsilon\right)  ^{2}/2}\ H_{n}(\beta x+\varepsilon), \label{AFunction}%
\end{equation}
where $H_{n}(x)$ are the Hermite polynomials \cite{Ni:Su:Uv} and%
\begin{align}
\alpha(s)  &  =\frac{\alpha_{0}\cos2s+\sin2s\left(  \beta_{0}^{4}+4\alpha
_{0}^{2}-1\right)  /4}{\beta_{0}^{4}\sin^{2}s+\left(  2\alpha_{0}\sin s+\cos
s\right)  ^{2}},\label{hhA}\\
\beta(s)  &  =\frac{\beta_{0}}{\sqrt{\beta_{0}^{4}\sin^{2}s+\left(
2\alpha_{0}\sin s+\cos s\right)  ^{2}}},\label{hhB}\\
\gamma(s)  &  =-\frac{1}{2}\arctan\frac{\beta_{0}^{2}\tan s}{1+2\alpha_{0}\tan
s},\label{hhG}\\
\delta(s)  &  =\frac{\delta_{0}\left(  2\alpha_{0}\sin s+\cos s\right)
+\varepsilon_{0}\beta_{0}^{3}\sin s}{\beta_{0}^{4}\sin^{2}s+\left(
2\alpha_{0}\sin s+\cos s\right)  ^{2}},\label{hhD}\\
\varepsilon(s)  &  =\frac{\varepsilon_{0}\left(  2\alpha_{0}\sin s+\cos
s\right)  -\beta_{0}\delta_{0}\sin s}{\sqrt{\beta_{0}^{4}\sin^{2}s+\left(
2\alpha_{0}\sin s+\cos s\right)  ^{2}}},\label{hhE}\\
\kappa(s)  &  =\sin^{2}s\frac{\varepsilon_{0}\beta_{0}^{2}\left(  \alpha
_{0}\varepsilon_{0}-\beta_{0}\delta_{0}\right)  -\alpha_{0}\delta_{0}^{2}%
}{\beta_{0}^{4}\sin^{2}s+\left(  2\alpha_{0}\sin s+\cos s\right)  ^{2}%
}\label{hhK}\\
&  \quad+\frac{1}{4}\sin2s\frac{\varepsilon_{0}^{2}\beta_{0}^{2}-\delta
_{0}^{2}}{\beta_{0}^{4}\sin^{2}s+\left(  2\alpha_{0}\sin s+\cos s\right)
^{2}}.\nonumber
\end{align}
The real or complex-valued parameters $\alpha_{0},$ $\beta_{0}\neq0,$
$\gamma_{0}=0,$ $\delta_{0},$ $\varepsilon_{0},$ $\kappa_{0}=0$ are initial
data of the corresponding Ermakov-type system \cite{Lan:Lop:Sus},
\cite{Lop:Sus:VegaGroup}.\footnote{From now on, we abbreviate $\alpha
_{0}=\alpha(0),$ etc for the sake of compactness.} A direct
\textsl{Mathematica} verification can be found in~\cite{notebook}. (A similar
harmonic motion of cold trapped atoms was experimentally realized
\cite{LeibfriedetalWineland03}.)

These \textquotedblleft missing\textquotedblright\ solutions that are omitted
in all textbooks on quantum mechanics (see \cite{LopSusVegaHarm} and
\cite{Marhic78}) provide a new multi-parameter family of oscillating
Hermite-Gaussian beams in parabolic (self-focusing fiber) waveguides, which
deserve an experimental observation; special cases were theo\-reti\-cally
studied earlier in \cite{Agrawaletal74}, \cite{Eichmann71}, \cite{Fock65},
\cite{GoncharenkoBook}, \cite{Kogelnik65}, \cite{Yariv3rdEd},
\cite{VinRudSuxBook79}. For graphical examples see Figures 1 and 2 of
Ref.~\cite{Mah:Sua:Sus13}. These modes are orthonormal for real-valued
parameters. As a result, every $L^{2}$ distribution of monochromatic light can
be expanded in terms of these modes. The corresponding generalized coherent or
minimum-uncertainty squeezed states are analyzed in \cite{KrySusVegaMinimum}.

\subsection{Her\-mite-Gaussian Beams}
\label{sec:HG}

The homogeneous paraxial wave equation,%
\begin{equation}
2iB_{s}+B_{xx}=0, \label{HomParWaveEq}%
\end{equation}
can be transformed by the substitution,%
\begin{equation}
B(x,s)=\frac{1}{\left(  1+s^{2}\right)  ^{1/4}}\exp\left(  \frac{isx^{2}%
}{2\left(  1+s^{2}\right)  }\right)  \ A\left(  \frac{x}{\sqrt{1+s^{2}}%
},\ \arctan s\right)  , \label{LinQuadTransform}%
\end{equation}
into the inhomogeneous one (\ref{Schroudinger}) (see \cite{Lop:Sus:VegaGroup}
and the references therein; a \textsl{Mathematica} verification can be found
in~\cite{notebook}). Composition of (\ref{WaveFunction}) and
(\ref{LinQuadTransform}) results in the following multi-parameter family of
\textquotedblleft spreading\textquotedblright\ solutions to the parabolic
equation \eqref{HomParWaveEq}:
\begin{align}
&  B_{n}(x,s)=\sqrt{\frac{\beta_{0}}{2^{n}n!\sqrt{\pi\left(  \left(
1+2\alpha_{0}s\right)  ^{2}+\beta_{0}^{4}s^{2}\right)  }}}\label{1DLinNewSols}%
\\
&  \quad\times\exp\left(  -\frac{\left(  \beta_{0}x+\varepsilon_{0}\right)
^{2}+2s\left(  \alpha_{0}\varepsilon_{0}-\delta_{0}\beta_{0}\right)
\varepsilon_{0}-i\left(  2x\left(  \alpha_{0}x+\delta_{0}\right)  -s\delta
_{0}^{2}\right)  }{2\left(  1+2\alpha_{0}s+i\beta_{0}^{2}s\right)  }\right)
\nonumber\\
&  \quad\times\exp\left(  -i\left(  n+\frac{1}{2}\right)  \arctan\left(
\frac{\beta_{0}^{2}s}{1+2\alpha_{0}s}\right)  \right)  H_{n}\left(
\frac{\beta_{0}\left(  x-\delta_{0}s\right)  +\left(  1+2\alpha_{0}s\right)
\varepsilon_{0}}{\sqrt{\left(  1+2\alpha_{0}s\right)  ^{2}+\beta_{0}^{4}s^{2}%
}}\right) \nonumber
\end{align}
for real or complex initial data \cite{Mah:Sua:Sus13}. The direct derivation
is also provided in~\cite{notebook}. It is worth noting that both of our
parameters $\varepsilon_{0}\neq0$ (shift) and $\delta_{0}\neq0$ (phase)
describe, in a natural way, the beam deviation from the optical axis and a
successive oblique propagation in an optical system, which is not usually
discussed in detail in the literature. (This solution is also relevant to the
concept of paraxial group \cite{BandresGuiz09} that comprises $2D$
transformations of a beam propagating through misaligned (tilted, translated,
or rotated) $ABCD$ optical systems.)

\textit{Note.} When $n=0,$ the intensity distribution $\left\vert
B_{0}(x,s)\right\vert ^{2}$ is normal in every beam cross section and the
width of that Gaussian intensity profile changes along the $s$ axis. The beam
waist/focal point, when $\nabla\left\vert B_{0}(x,s)\right\vert ^{2}=0$ and $\max
\left\vert B_{0}(x,s)\right\vert ^{2}=\sqrt{4\alpha_{0}^{2}+\beta_{0}^{4}%
}\ /\left\vert \beta_{0}\right\vert \!\sqrt{\pi},$ occurs at%
\[
x_{0}=-\frac{2\alpha_{0}\delta_{0}+\beta_{0}^{3}\varepsilon_{0}}{4\alpha
_{0}^{2}+\beta_{0}^{4}},\qquad s_{0}=-\frac{2\alpha_{0}}{4\alpha_{0}^{2}%
+\beta_{0}^{4}}.
\]
In the limit $\beta_{0}\rightarrow0,$ we obtain the traditional definition of
focus in the thin lens approximation \cite{AhmanNikBook},
\cite{VinRudSuxBook79}:%
\[
\left.  \left(  B_{n}(x,0)/\sqrt{\beta_{0}}\right)  \right\vert _{\beta
_{0}\rightarrow0}=Ce^{i\delta_{0}x}\ e^{-ix^{2}/\left(  2s_{0}\right)  }.
\]
The beam radius, related to standard deviation, is defined as the distance at
which the amplitude is $1/e$ times of that on the mean \cite{KogelnikLi66}.
Thus the smallest radius is observed at the focal point:%
\[
r_{0}=\frac{\left\vert \beta_{0}\right\vert }{\sqrt{4\alpha_{0}^{2}+\beta
_{0}^{4}}}.
\]
(Details are given in \cite{notebook} together with a graphical example of
\ \textquotedblleft self-focusing\textquotedblright\ of the corresponding
Gaussian mode \textquotedblleft without any external force\textquotedblright.)

Among various special cases of these multi-parameter solutions are the
so-called elegant Hermite-Gaussian beams. In our notation, they occur for the
complex-valued parameters when $4\alpha_{0}^{2}+\beta_{0}^{4}=0.$\ The
substitution%
\begin{equation}
\frac{1+2\alpha_{0}s+i\beta_{0}^{2}s}{\sqrt{\left(  1+2\alpha_{0}s\right)
^{2}+\beta_{0}^{4}s^{2}}}=\exp\left(  i\arctan\left(  \frac{\beta_{0}^{2}%
s}{1+2\alpha_{0}s}\right)  \right)  \label{arctansub}%
\end{equation}
followed by $2\alpha_{0}=i\beta_{0}^{2}$ results in%
\begin{align}
&  B_{n}^{\left(  \text{el}\right)  }(x,s)=\sqrt{\frac{\beta_{0}}%
{2^{n}n!\left(  1+2i\beta_{0}^{2}s\right)  ^{n+1}\sqrt{\pi}}}\ H_{n}\left(
\frac{\beta_{0}\left(  x-\delta_{0}s+i\beta_{0}\varepsilon_{0}s\right)
+\varepsilon_{0}}{\sqrt{1+2i\beta_{0}^{2}s}}\right)
\label{ElegantGaussHermite}\\
&  \quad\qquad\qquad\times\exp\left(  -\frac{2\beta_{0}^{2}x^{2}+\left(
2\beta_{0}\varepsilon_{0}-i\delta_{0}\right)  \left(  2x-\delta_{0}s\right)
-\left(  1+2i\beta_{0}^{2}s\right)  \varepsilon_{0}^{2}}{2\left(
1+2i\beta_{0}^{2}s\right)  }\right)  .\nonumber
\end{align}
When $n=0,$ one gets the multi-parameter fundamental Gaussian modes. In this
case,%
\begin{align}
&  \left\vert B_{0}^{\left(  \text{el}\right)  }(x,s)\right\vert ^{2}%
=\frac{\beta_{0}}{\sqrt{\pi\left(  1+4\beta_{0}^{4}s^{2}\right)  }}\exp\left(
-\frac{2\beta_{0}^{2}\left(  x-\delta_{0}s\right)  ^{2}+2\beta_{0}%
\varepsilon_{0}\left(  x-\delta_{0}s\right)  +\varepsilon_{0}^{2}\left(
1+2\beta_{0}^{4}s^{2}\right)  }{1+4\beta_{0}^{4}s^{2}}\right)
,\label{FundamentalElegant}\\
&  \qquad\qquad\qquad\qquad\int_{-\infty}^{\infty}\left\vert B_{0}^{\left(
\text{el}\right)  }(x,s)\right\vert ^{2}\ dx=\frac{e^{-\varepsilon_{0}^{2}/2}%
}{\sqrt{2}}.\nonumber
\end{align}
These optical fields obey a certain \textquotedblleft
propagation-invariant\ similarity rule\textquotedblright:
\[
\left\vert B_{0}^{\left(  \text{el}\right)  }(x,s)\right\vert ^{2}=\frac
{\beta_{0}e^{-k^{2}/2}}{\sqrt{\pi\left(  1+4\beta_{0}^{4}s^{2}\right)  }%
},\qquad k=\text{constant},%
\]
provided that $2\beta_{0}\left(  x-\delta_{0}s\right)  =-\varepsilon_{0}%
\pm\sqrt{\left(  k^{2}-\varepsilon_{0}^{2}\right)  \left(  1+4\beta_{0}%
^{4}s^{2}\right)  }$ and $k^{2}\geq\varepsilon_{0}^{2}.$ Thus, our solution
describes an \textquotedblleft oblique propagation\textquotedblright\ of the
laser beam with respect to the optical axis (approaching the corresponding
slanted asymptotes as $s\rightarrow\infty).$ For instance, the best
confinement of optical energy occurs around the line $x=\delta_{0}s,$ which
becomes the direction of the beam propagation, when $\varepsilon_{0}=0.$ This
simple example shows how one can use our extra parameters in order to aim the
laser beam and to maximize its intensity. A graphical example is provided in
Figure~\ref{fig.slanted}; see also \cite{notebook} for more details.

\begin{figure}[htbp]
\centering \includegraphics[width=0.65\textwidth]{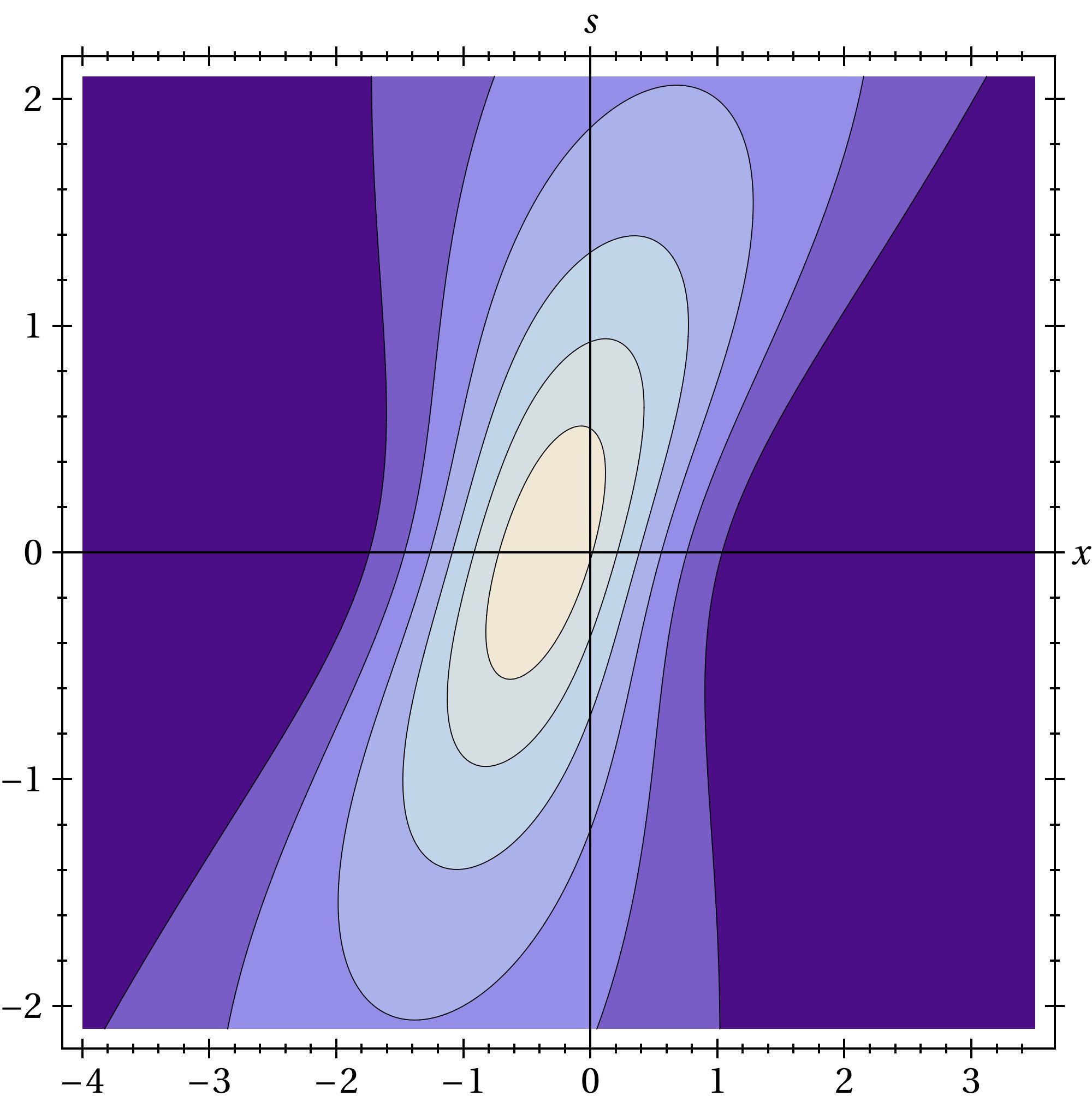}
\caption{Levels of intensity of the slanted beam \eqref{FundamentalElegant}
when $\beta_0=2^{-1/2}$ and $\delta_0=\varepsilon_0=1/2.$}
\label{fig.slanted}
\end{figure}

Moreover, by the expansion transformation of the Schr\"{o}dinger group
\cite{Lop:Sus:VegaGroup}:%
\[
B(x,s)=\frac{1}{\sqrt{1+ms}}\exp\left(  \frac{imx^{2}}{2\left(  1+ms\right)
}\right)  \ C\left(  \frac{x}{1+ms},\ \frac{s}{1+ms}\right)  \qquad\left(
m=\text{constant}\right)
\]
one arrives at the following Gaussian package:%
\begin{align*}
&  B_{0}^{\left(  \text{el,exp}\right)  }(x,s)=\exp\left(  \frac{imx^{2}%
}{2\left(  1+ms\right)  }\right)  \sqrt{\frac{\beta_{0}}{\left(
1+ms+2i\beta_{0}^{2}s\right)  \sqrt{\pi}}}\\
&  \times\exp\left(  -\frac{2\beta_{0}^{2}x^{2}+2\left(  1+ms\right)  \left(
\beta_{0}\varepsilon_{0}-i\delta_{0}\right)  x+\left(  1+ms\right)  \left(
\varepsilon_{0}^{2}+s\left(  \left(  \varepsilon_{0}^{2}(m+2i\beta_{0}%
^{2}\right)  -2\beta_{0}\delta_{0}\varepsilon_{0}+i\delta_{0}^{2}\right)
\right)  }{2\left(  1+ms\right)  \left(  1+ms+2i\beta_{0}^{2}s\right)
}\right)
\end{align*}
$\allowbreak$(see also \cite{notebook} for a direct verification). One may say
that its spatial evolution also resembles the generation of an optical
\textquotedblleft rogue wave\textquotedblright\ which is appearing at a
certain point and then dissipating. A graphical example of the intensity and
optical energy localization is provided in \cite{notebook}.

Special families of Gaussian beams have found significant applications in
science, biomedicine, and technology. Among them, the fundamental Gaussian
mode described by Eq.~(\ref{ElegantGaussHermite}), when $n=0$ and $\alpha
_{0}=\delta_{0}=\varepsilon_{0}=0,$ is the most useful one. According to
\cite{AbramAlievaRodrigo12}, the laser beams of this kind are utilized for the
material cutting and surgery, for data reading in CD-DVD players and in
optical remote sensing technology, and for microparticle trapping and atom
cooling. Thus, telecommunication networks including the internet are based
upon optical waveguide systems in which fundamental Gaussian modes are
propagated in a wavelength multiplexing configuration.

In general, our multi-parameter solutions (\ref{1DLinNewSols}) can be thought
of as the Hermite-Gaussian beams with \textquotedblleft aberration/astigmatic
elements\textquotedblright\ that are useful, for example,\ in the propagation
of paraxial beams through a misaligned optical system (see
Refs.~\cite{AbramAlievaRodrigo12}, \cite{AbramVol91}, \cite{AbramVolBook},
\cite{AhmanNikBook}, \cite{BandresGuiz09}. \cite{KogelnikLi66},
\cite{PratesiRonchi77}, \cite{Sieg73}, \cite{YarivYeh}, \cite{VinRudSuxBook79}%
, \cite{Wunsche89} for further examples of these important modes in one and two-dimensions).

\textit{Note.} Although the multi-parameter elegant Hermite-Gaussian beams are
not orthogonal, the corresponding integral:%
\[
\int_{-\infty}^{\infty}\left(  B_{n}^{\left(  \text{el}\right)  }\left(
x,s\right)  \right)  ^{\ast}B_{m}^{\left(  \text{el}\right)  }\left(
x,s\right)  \ dx
\]
can be evaluated in terms of generalized hypergeometric functions in a way
that is similar to \cite{KrySusVegaMinimum}. An investigation of certain
minimization properties may be of interest.

\subsection{Breathing Spiral Laser Beams}

\label{sec:BreathingSpiral}

By the ansatz $\Psi(x,y,t) =\chi(\xi,\eta,\tau) ,$ $T=-\tau$ and
\begin{equation}
\left(
\begin{array}
[c]{c}%
X\\
Y
\end{array}
\right)  =\left(
\begin{array}
[c]{cc}%
\cos\omega\tau & -\sin\omega\tau\\
\sin\omega\tau & \cos\omega\tau
\end{array}
\right)  \left(
\begin{array}
[c]{c}%
\xi\\
\eta
\end{array}
\right)  \label{Rotation}%
\end{equation}
($\omega=\ $constant), Equation \eqref{2DSchroedingerTransformed} with
$c_{0}=1$ can be transformed to the equation of motion for the isotropic
planar harmonic oscillator in a perpendicular uniform magnetic field:
\begin{equation}
i\Psi_{T}+\Psi_{XX}+\Psi_{YY}=\left(  X^{2}+Y^{2}\right)  \Psi+i\omega\left(
X\Psi_{Y}-Y\Psi_{X}\right)  . \label{2DSchroedingerReduced}%
\end{equation}
The latter equation was solved in the early days of quantum mechanics by Fock
\cite{Fock28-1} in polar coordinates, $X=R\cos\Theta$ and $Y=R\sin\Theta:$%
\begin{align}
&  \Psi(R,\Theta,T) =\sqrt{\frac{n!}{\pi\left(  n+\left\vert m\right\vert
\right)  !}}\ e^{-iET}e^{im\Theta}R^{\left\vert m\right\vert }e^{-R^{2}%
/2}L_{n}^{\left\vert m\right\vert }\left(  R^{2}\right)
,\label{FockLaguerre2D}\\
&  \qquad\qquad\qquad\quad E=4n+2\left(  \left\vert m\right\vert +1\right)
-m\omega\nonumber
\end{align}
$(m=\pm0,\pm1,\ldots,n=0,1,\ldots) $ in terms of Laguerre polynomials
\cite{Ni:Su:Uv}. This wave function coincides, up to a simple factor, with the
one for a flat isotropic oscillator without magnetic field. Therefore, its
development in terms of (\ref{2DGaussHermite}) for standard harmonics is a
$2D$ special case of the multi-dimensional expansions from \cite{Ni:Su:Uv}
(see also \cite{Cor-Sot:Sus}, \cite{MeCoSus} and the references therein).

By back substitution, one arrives at a general family of spiral solutions in
inhomogeneous media. For example, the $2D$ paraxial wave equation%
\begin{equation}
2iA_{s}+A_{xx}+A_{yy}=\left(  x^{2}+y^{2}\right)  A \label{2DStandardForm}%
\end{equation}
possesses the following Laguerre-Gaussian modes \cite{Mah:Sua:Sus13}
\begin{align}
A_{n}^{m}(x,y,s)  &  =\beta\sqrt{\frac{n!}{\pi\left(  n+m\right)  !}%
}e^{i\left(  \alpha\left(  x^{2}+y^{2}\right)  +\delta_{1}x+\delta_{2}%
y+\kappa_{1}+\kappa_{2}\right)  }e^{i\left(  2n+m+1\right)  \gamma}%
\big(\beta(x\pm iy)+\varepsilon_{1}\pm i\varepsilon_{2}\big)^{m}%
\label{2DGaussLaguerre}\\
&  \qquad\times e^{-(\beta x+\varepsilon_{1})^{2}/2-(\beta y+\varepsilon
_{2})^{2}/2}L_{n}^{m}\left(  (\beta x+\varepsilon_{1})^{2}+(\beta
y+\varepsilon_{2})^{2}\right)  ,\quad m\geq0\nonumber
\end{align}
(by the action of Schr\"{o}dinger's group; see \cite{Lop:Sus:VegaGroup},
\cite{LopSusVegaHarm}, \cite{Mah:Sus13} and the references therein for
classical accounts). Here, Equations \eqref{hhA}--\eqref{hhK} are utilized for
complex or real-valued parameters $\alpha_{0},$ $\beta_{0}\neq0,$ $\delta
_{0}^{(1,2)},$ $\varepsilon_{0}^{(1,2)}$ (the last two sets may be different
for $x$ and $y$ variables, respectively). Examples are shown in Figures 3 and
4 of Ref.~\cite{Mah:Sua:Sus13}.

In addition, a special Gaussian form of our solution (\ref{2DGaussHermite})
gives a general example of spiral elliptic beams discussed in
\cite{GoncharenkoBook}, \cite{YarivYeh}.

\subsection{Laguerre-Gaussian Beams}

\label{sec:LaguerreGaussian}

The homogeneous parabolic equation,%
\begin{equation}
2iB_{s}+B_{xx}+B_{yy}=0, \label{2DFree}%
\end{equation}
with the help of\footnote{Both Equations, (\ref{2DStandardForm}) and
(\ref{2DFree}), are obviously invariant under plane rotations.}%
\begin{equation}
B(x,y,s)=\frac{1}{\left(  s^{2}+1\right)  ^{1/2}}\exp\left(  \frac{is\left(
x^{2}+y^{2}\right)  }{2\left(  s^{2}+1\right)  }\right)  \ A\left(  \frac
{x}{\sqrt{s^{2}+1}},\frac{y}{\sqrt{s^{2}+1}},\ \arctan s\right)
\label{2DLinQuad}%
\end{equation}
can be reduced to the standard form (\ref{2DStandardForm}). A multi-parameter
solution is given by \cite{Mah:Sua:Sus13}%
\begin{align}
&  B_{n}^{m}(x,y,s)=\frac{1}{1+2\alpha_{0}s+i\beta_{0}^{2}s}\exp\left(
-i\left(  m+2n\right)  \arctan\left(  \dfrac{s\beta_{0}^{2}}{1+2\alpha_{0}%
s}\right)  \right) \label{2DGaussLaguerreSpreading}\\
&  \quad\times\exp\left(  -is\frac{\left.  \delta_{0}^{(1)}\right.
^{2}+\left.  \delta_{0}^{(2)}\right.  ^{2}}{2\left(  1+2\alpha_{0}s+i\beta
_{0}^{2}s\right)  }\right) \nonumber\\
&  \quad\times\exp\left(  \frac{\left(  2i\alpha_{0}-\beta_{0}^{2}\right)
\left(  x^{2}+y^{2}\right)  -2\left(  \beta_{0}\varepsilon_{0}^{\left(
1\right)  }-i\delta_{0}^{(1)}\right)  x-2\left(  \beta_{0}\varepsilon
_{0}^{(2)}-i\delta_{0}^{(2)}\right)  y}{2\left(  1+2\alpha_{0}s+i\beta_{0}%
^{2}s\right)  }\right) \nonumber\\
&  \quad\times\exp\left(  \frac{2\beta_{0}s\left(  \delta_{0}^{(1)}%
\varepsilon_{0}^{(1)}+\delta_{0}^{(2)}\varepsilon_{0}^{(2)}\right)  -\left(
1+2\alpha_{0}s\right)  \left(  \left.  \varepsilon_{0}^{(1)}\right.
^{2}+\left.  \varepsilon_{0}^{(2)}\right.  ^{2}\right)  }{2\left(
1+2\alpha_{0}s+i\beta_{0}^{2}s\right)  }\right) \nonumber\\
&  \quad\times\left(  \frac{\beta_{0}(x+iy)-\left(  \delta_{0}^{(1)}%
+i\delta_{0}^{(2)}\right)  s+\left(  \varepsilon_{0}^{(1)}+i\varepsilon
_{0}^{(2)}\right)  \left(  1+2\alpha_{0}s\right)  }{\sqrt{\left(
1+2\alpha_{0}s\right)  ^{2}+\beta_{0}^{4}s^{2}}}\right)  ^{m}%
\!\!\!\!\nonumber\\
&  \quad\times L_{n}^{m}\left(  \frac{\left(  \beta_{0}\left(  x-\delta
_{0}^{(1)}s\right)  +\varepsilon_{0}^{(1)}\left(  1+2\alpha_{0}s\right)
\right)  ^{2}+\left(  \beta_{0}\left(  y-\delta_{0}^{(2)}s\right)
+\varepsilon_{0}^{(2)}\left(  1+2\alpha_{0}s\right)  \right)  ^{2}}{\left(
1+2\alpha_{0}s\right)  ^{2}+\beta_{0}^{4}s^{2}}\right) \nonumber
\end{align}
by the action of Schr\"{o}dinger's group. (The corresponding parameters are
initial data of the Ermakov-type system (\ref{SysA})--(\ref{SysF}); see
Lemma~\ref{lem:nonlinear}.)

\textit{Note.} An example of \textquotedblleft self-focusing\textquotedblright%
\ Gaussian mode, when $n=m=0,$ is presented in~\cite{notebook}. The
corresponding focal point, when $\max\left\vert B_{0}^{0}(x,y,s)\right\vert
^{2}=1+4\alpha_{0}^{2}/\beta_{0}^{4},$ is located at%
\[
x_{0}=-\frac{2\alpha_{0}\delta_{0}^{(1)}+\beta_{0}^{3}\varepsilon_{0}^{(1)}%
}{4\alpha_{0}^{2}+\beta_{0}^{4}},\quad y_{0}=-\frac{2\alpha_{0}\delta
_{0}^{(2)}+\beta_{0}^{3}\varepsilon_{0}^{(2)}}{4\alpha_{0}^{2}+\beta_{0}^{4}%
},\quad s_{0}=-\frac{2\alpha_{0}}{4\alpha_{0}^{2}+\beta_{0}^{4}}.
\]
It is worth noting that this mode describes the well-known effect of focusing
of a laser beam in a uniform medium after passing the lens/quadratic medium.
(In our approach, the quadratic, or lens-like, medium creates the
corresponding initial data for the focusing beam, in a mathematically natural way.)

For the set of complex-valued parameters, two special cases are of interest,
namely the multi-parameter \textquotedblleft elegant\textquotedblright%
\ Laguerre-Gaussian beams, when $2\alpha_{0}=i\beta_{0}^{2}:$%
\begin{align}
&  \left.  B_{n}^{m}(x,y,s)\right.  ^{\left(  \text{el}\right)  }=\left(
1+2i\beta_{0}^{2}s\right)  ^{-m-n-1}\exp\left(  -is\frac{\left.  \delta
_{0}^{(1)}\right.  ^{2}+\left.  \delta_{0}^{\left(  2\right)  }\right.  ^{2}%
}{2\left(  1+2i\beta_{0}^{2}s\right)  }\right) \label{ElegantLG}\\
&  \quad\times\exp\left(  -\frac{\beta_{0}^{2}\left(  x^{2}+y^{2}\right)
+\left(  \beta_{0}\varepsilon_{0}^{(1)}-i\delta_{0}^{(1)}\right)  x+\left(
\beta_{0}\varepsilon_{0}^{(2)}-i\delta_{0}^{(2)}\right)  y}{\left(
1+2i\beta_{0}^{2}s\right)  }\right) \nonumber\\
&  \quad\times\exp\left(  \frac{2\beta_{0}s\left(  \delta_{0}^{(1)}%
\varepsilon_{0}^{(1)}+\delta_{0}^{(2)}\varepsilon_{0}^{(2)}\right)  -\left(
1+i\beta_{0}^{2}s\right)  \left(  \left.  \varepsilon_{0}^{(1)}\right.
^{2}+\left.  \varepsilon_{0}^{(2)}\right.  ^{2}\right)  }{2\left(
1+2i\beta_{0}^{2}s\right)  }\right) \nonumber\\
&  \quad\times\left(  \beta_{0}\left(  x+iy\right)  -\left(  \delta_{0}%
^{(1)}+i\delta_{0}^{(2)}\right)  s+\left(  \varepsilon_{0}^{(1)}%
+i\varepsilon_{0}^{(2)}\right)  \left(  1+2\alpha_{0}s\right)  \right)
^{m}\!\!\!\!\nonumber\\
&  \quad\times L_{n}^{m}\left(  \frac{\left(  \beta_{0}\left(  x-\delta
_{0}^{(1)}s\right)  +\varepsilon_{0}^{(1)}\left(  1+i\beta_{0}^{2}s\right)
\right)  ^{2}+\left(  \beta_{0}\left(  y-\delta_{0}^{(2)}s\right)
+\varepsilon_{0}^{(2)}\left(  1+i\beta_{0}^{2}s\right)  \right)  ^{2}%
}{1+2i\beta_{0}^{2}s}\right)  ,\nonumber
\end{align}
and multi-parameter \textquotedblleft diffraction-free\textquotedblright%
\ Laguerre beams, when $2\alpha_{0}=-i\beta_{0}^{2}:$%
\begin{align}
&  \left.  B_{n}^{m}(x,y,s)\right.  ^{\left(  \text{dif}\right)  }=\left(
1-2i\beta_{0}^{2}s\right)  ^{n}\exp\left(  -\left(  \beta_{0}\varepsilon
_{0}^{(1)}-i\delta_{0}^{(1)}\right)  x-\left(  \beta_{0}\varepsilon_{0}%
^{(2)}-i\delta_{0}^{(2)}\right)  y\right) \label{DiffL}\\
&  \quad\times\exp\left(  \beta_{0}s\left(  \delta_{0}^{(1)}\varepsilon
_{0}^{(1)}+\delta_{0}^{(2)}\varepsilon_{0}^{(2)}\right)  -\frac{1-i\beta
_{0}^{2}s}{2}\left(  \left.  \varepsilon_{0}^{(1)}\right.  ^{2}+\left.
\varepsilon_{0}^{(2)}\right.  ^{2}\right)  -\frac{is}{2}\left(  \left.
\delta_{0}^{(1)}\right.  ^{2}+\left.  \delta_{0}^{(2)}\right.  ^{2}\right)
\right) \nonumber\\
&  \quad\times\left(  \beta_{0}(x+iy)-\left(  \delta_{0}^{(1)}+i\delta
_{0}^{(2)}\right)  s+\left(  \varepsilon_{0}^{(1)}+i\varepsilon_{0}%
^{(2)}\right)  \left(  1-i\beta_{0}^{2}s\right)  \right)  ^{m}%
\!\!\!\!\nonumber\\
&  \quad\times L_{n}^{m}\left(  \frac{\left(  \beta_{0}\left(  x-\delta
_{0}^{(1)}s\right)  +\varepsilon_{0}^{(1)}\left(  1+i\beta_{0}^{2}s\right)
\right)  ^{2}+\left(  \beta_{0}\left(  y-\delta_{0}^{(2)}s\right)
+\varepsilon_{0}^{(2)}\left(  1+i\beta_{0}^{2}s\right)  \right)  ^{2}%
}{1-2i\beta_{0}^{2}s}\right)  .\nonumber
\end{align}
For $m=n=0$ and $\varepsilon_{0}^{(1,2)}=0,$ this beam degenerates into the
ordinary plane wave propagating in the direction $\boldsymbol{r}=\left(
\delta_{0}^{(1)},\delta_{0}^{(2)},1\right)  .$
Polynomial solutions have also been discussed in \cite{AbramPol} and \cite{Dennis11}.

Among numerous special cases are the Laguerre-Gaussian beams discovered in
\cite{Belanger84}, \cite{PiesSCHechSha00}, \cite{VinRudSuxBook79},
\cite{Wunsche89}. By classical accounts \cite{AbramAlievaRodrigo12},
\cite{AbramVolBook}, \cite{GoncharenkoBook}, \cite{KogelnikLi66},
\cite{Smith00}, \cite{Yariv3rdEd}, \cite{YarivYeh} (see also the references
therein), the families of the Hermite-Gaussian and Laguerre-Gaussian modes
arise naturally as approximate eigenfunctions of the resonators with
rectangular or circular spherical/flat mirrors, respectively. They also serve
as models for eigenmodes of certain fibers. The introduction of astigmatic
elements in optical resonators or after them leads to the generation of
Hermite-Laguerre-Gaussian and Ince-Gaussian beams \cite{Schwarzetal04}. The
Laguerre-Gaussian beams are also proposed for the applications in free-space
optical communications systems, where the information is encoded as orbital
angular momentum states of the beam \cite{Gibsonetal04}, in quantum optics to
design entanglement states of photons \cite{Mairetal01},
\cite{MolinaTerrizaetal07}, in laser ablation \cite{Hamazakietal10}, and in
optical metrology \cite{Furhapteretal05}, to name a few examples. Angular
momentum of laser modes is discussed in \cite{YaoPadgett11}.

\subsection{Bessel-Gaussian Beams}

Use of the familiar generating relations%
\begin{align}
&  \sum_{n=0}^{\infty}\frac{L_{n}^{\alpha}(\xi)\ t^{n}}{\Gamma\left(
\alpha+n+1\right)  }=(\xi t)^{-\alpha/2}e^{t}J_{\alpha}\left(  2\sqrt{\xi
t}\right) \label{BesselGaussian2D}\\
&  \quad=\frac{e^{t}}{\Gamma(\alpha+1)}\ _{0}F_{1}\left(  -;\alpha+1;-\xi
t\right)  ,\quad J_{\nu}(z)=\frac{\left(  z/2\right)  ^{\nu}}{\Gamma(\nu
+1)}~_{0}F_{1}\left(
\begin{array}
[c]{c}%
-\medskip\\
\nu+1
\end{array}
;-\frac{z^{2}}{4}\right) \nonumber
\end{align}
in (\ref{2DGaussLaguerreSpreading}) results in a new multi-parameter family of
the Bessel-Gaussian beams:%
\begin{align}
&  B(x,y,s)=\frac{1}{1+2\alpha_{0}s+i\beta_{0}^{2}s}\exp\left(  -is\frac
{\left.  \delta_{0}^{(1)}\right.  ^{2}+\left.  \delta_{0}^{(2)}\right.  ^{2}%
}{2\left(  1+2\alpha_{0}s+i\beta_{0}^{2}s\right)  }\right)
\label{BesselGaussian}\\
&  \times\exp\left(  i\frac{\left(  2\alpha_{0}+i\beta_{0}^{2}\right)  \left(
x^{2}+y^{2}\right)  +2\left(  \delta_{0}^{(1)}+i\beta\varepsilon_{0}%
^{(1)}\right)  x+2\left(  \delta_{0}^{(2)}+i\beta\varepsilon_{0}^{(2)}\right)
y}{2\left(  1+2\alpha_{0}s+i\beta_{0}^{2}s\right)  }\right) \nonumber\\
&  \times\exp\left(  \frac{2\beta_{0}s\left(  \delta_{0}^{(1)}\varepsilon
_{0}^{(1)}+\delta_{0}^{(2)}\varepsilon_{0}^{(2)}\right)  -\left(
1+2\alpha_{0}s\right)  \left(  \left.  \varepsilon_{0}^{(1)}\right.
^{2}+\left.  \varepsilon_{0}^{(2)}\right.  ^{2}\right)  }{2\left(
1+2\alpha_{0}s+i\beta_{0}^{2}s\right)  }\right) \nonumber\\
&  \times\exp\left(  t\frac{1+2\alpha_{0}s-i\beta_{0}^{2}s}{1+2\alpha
_{0}s+i\beta_{0}^{2}s}\right)  \left(  \frac{\beta_{0}\left(  x+iy-\left(
\delta_{0}^{(1)}+i\delta_{0}^{(2)}\right)  s\right)  +\left(  \varepsilon
_{0}^{(1)}+i\varepsilon_{0}^{(2)}\right)  \left(  1+2\alpha_{0}s\right)
}{1+2\alpha_{0}s+i\beta_{0}^{2}s}\right)  ^{m}\nonumber\\
&  \times\ _{0}F_{1}\left(
\begin{array}
[c]{c}%
-\medskip\\
m+1
\end{array}
;-t\frac{\left(  \beta_{0}\left(  x-\delta_{0}^{(1)}s\right)  +\varepsilon
_{0}^{(1)}\left(  1+2\alpha_{0}s\right)  \right)  ^{2}+\left(  \beta
_{0}\left(  y-\delta_{0}^{(2)}s\right)  +\varepsilon_{0}^{(2)}\left(
1+2\alpha_{0}s\right)  \right)  ^{2}}{\left(  1+2\alpha_{0}s+i\beta_{0}%
^{2}s\right)  ^{2}}\right)  .\nonumber
\end{align}
(See~\cite{notebook} for an automatic verification.) For the complex-valued
parameters, among two interesting special cases are multi-parameter
\textquotedblleft elegant\textquotedblright\ Bessel-Gaussian beams, when
$2\alpha_{0}=i\beta_{0}^{2}:$
\begin{align}
&  B^{\left(  \text{el}\right)  }(x,y,s)=\frac{1}{1+2i\beta_{0}^{2}s}%
\exp\left(  -is\frac{\left.  \delta_{0}^{(1)}\right.  ^{2}+\left.  \delta
_{0}^{(2)}\right.  ^{2}}{2\left(  1+2i\beta_{0}^{2}s\right)  }\right)
\label{ElegantB}\\
&  \times\exp\left(  \frac{t-\beta_{0}^{2}\left(  x^{2}+y^{2}\right)  -\left(
\beta\varepsilon_{0}^{(1)}-i\delta_{0}^{(1)}\right)  x-\left(  \beta
\varepsilon_{0}^{(2)}-i\delta_{0}^{(2)}\right)  y}{1+2i\beta_{0}^{2}s}\right)
\nonumber\\
&  \times\exp\left(  \frac{2\beta_{0}s\left(  \delta_{0}^{(1)}\varepsilon
_{0}^{(1)}+\delta_{0}^{(2)}\varepsilon_{0}^{(2)}\right)  -\left(  1+i\beta
_{0}^{2}s\right)  \left(  \left.  \varepsilon_{0}^{(1)}\right.  ^{2}+\left.
\varepsilon_{0}^{(2)}\right.  ^{2}\right)  }{2\left(  1+2i\beta_{0}%
^{2}s\right)  }\right) \nonumber\\
&  \times\left(  \frac{\beta_{0}\left(  x+iy-\left(  \delta_{0}^{(1)}%
+i\delta_{0}^{(2)}\right)  s\right)  +\left(  \varepsilon_{0}^{(1)}%
+i\varepsilon_{0}^{(2)}\right)  \left(  1+i\beta_{0}^{2}s\right)  }%
{1+2i\beta_{0}^{2}s}\right)  ^{m}\nonumber\\
&  \times\ _{0}F_{1}\left(
\begin{array}
[c]{c}%
-\medskip\\
m+1
\end{array}
;-t\frac{\left(  \beta_{0}\left(  x-\delta_{0}^{(1)}s\right)  +\varepsilon
_{0}^{(1)}\left(  1+i\beta_{0}^{2}s\right)  \right)  ^{2}+\left(  \beta
_{0}\left(  y-\delta_{0}^{(2)}s\right)  +\varepsilon_{0}^{(2)}\left(
1+i\beta_{0}^{2}s\right)  \right)  ^{2}}{\left(  1+2i\beta_{0}^{2}s\right)
^{2}}\right)  ,\nonumber
\end{align}
and multi-parameter \textquotedblleft diffraction-free\textquotedblright%
\ Bessel beams, when $2\alpha_{0}=-i\beta_{0}^{2}:$%
\begin{align}
&  B^{\left(  \text{dif}\right)  }(x,y,s)=\exp\left(  t\left(  1-2i\beta
_{0}^{2}s\right)  -\left(  \beta\varepsilon_{0}^{(1)}-i\delta_{0}%
^{(1)}\right)  x-\left(  \beta\varepsilon_{0}^{(2)}-i\delta_{0}^{(2)}\right)
y\right) \label{DifB}\\
&  \times\exp\left(  \beta_{0}s\left(  \delta_{0}^{(1)}\varepsilon_{0}%
^{(1)}+\delta_{0}^{(2)}\varepsilon_{0}^{(2)}\right)  -\frac{\left(
1-i\beta_{0}^{2}s\right)  }{2}\left(  \left.  \varepsilon_{0}^{(1)}\right.
^{2}+\left.  \varepsilon_{0}^{(2)}\right.  ^{2}\right)  -\frac{is}{2}\left(
\left.  \delta_{0}^{(1)}\right.  ^{2}+\left.  \delta_{0}^{(2)}\right.
^{2}\right)  \right) \nonumber\\
&  \times\left(  \beta_{0}\left(  x+iy-\left(  \delta_{0}^{(1)}+i\delta
_{0}^{(2)}\right)  s\right)  +\left(  \varepsilon_{0}^{(1)}+i\varepsilon
_{0}^{(2)}\right)  \left(  1-i\beta_{0}^{2}s\right)  \right)  ^{m}\nonumber\\
&  \times\ _{0}F_{1}\left(
\begin{array}
[c]{c}%
-\medskip\\
m+1
\end{array}
;-t\left(  \beta_{0}\left(  x-\delta_{0}^{(1)}s\right)  +\varepsilon_{0}%
^{(1)}\left(  1-i\beta_{0}^{2}s\right)  \right)  ^{2}-t\left(  \beta
_{0}\left(  y-\delta_{0}^{(2)}s\right)  +\varepsilon_{0}^{(2)}\left(
1-i\beta_{0}^{2}s\right)  \right)  ^{2}\right)  .\nonumber
\end{align}
For $m=0$ and $\varepsilon_{0}^{(1,2)}=0,$ the latter beams have the peculiar
property of conserving the same disturbance distribution, apart from the phase
factor, across any plane parallel to the $xy$-plane in the direction of
propagation: $x=x_{0}+\delta_{0}^{(1)}s,$ $y=y_{0}+\delta_{0}^{(2)}s,$
$z=z_{0}+s.$ Graphical examples are given in \cite{notebook}.

Diffraction-free Bessel beams are reviewed in \cite{AbramAlievaRodrigo12},
\cite{TuruFrieb09} (see also \cite{Durnin87}, \cite{DurninMiceli87},
\cite{Gorietal87}, \cite{Sprangehafizi91} and the references therein for
classical accounts on propagation-invariant optical fields and Bessel modes).

\subsection{Spiral Beams}

Two-dimensional solutions of the paraxial wave equation \eqref{2DFree}, that
possess the propagation-invariant property%
\[
\iint_{\mathbb{R}^{2}}\,\big\vert B(x,y,0)\big\vert^{2}\ dx\,dy=\iint%
_{\mathbb{R}^{2}}\,\big\vert B(X,Y,s)\big\vert^{2}\ dX\,dY=\text{constant}%
\]
under rotation and rescaling $X=\rho(s)\left(  x\cos\theta(s)+y\sin
\theta\left(  s\right)  \right)  ,$ $Y=\rho(s)\left(  -x\sin\theta
(s)+y\sin\theta\left(  s\right)  \right)  ,$ were investigated in detail
\cite{AbramVolUFN}, \cite{AbramVolBook}, \cite{PiesSCHechSha00}, and
\cite{Schecheretal96}.

In Section~\ref{sec:BreathingSpiral}, we have already analyzed the transition
to a rotating frame of reference; see Equations
\eqref{Rotation}--\eqref{FockLaguerre2D}. As a combined result,
Equation~\eqref{2DFree} by means of the substitution
\begin{align}
&  B(x,y,s)=\frac{1}{\left(  s^{2}+1\right)  ^{1/2}}\exp\left(  \frac
{is\left(  x^{2}+y^{2}\right)  }{2\left(  s^{2}+1\right)  }\right)
\ \label{2DLinQuadRot}\\
&  \quad\times C\left(  \frac{x\cos\left(  \omega\arctan s\right)
+y\sin\left(  \omega\arctan s\right)  }{\sqrt{s^{2}+1}},\frac{-x\sin\left(
\omega\arctan s\right)  +y\cos\left(  \omega\arctan s\right)  }{\sqrt{s^{2}%
+1}},\ \arctan s\right) \nonumber
\end{align}
can be transformed into the equation of motion for the isotropic planar
harmonic oscillator in a perpendicular uniform magnetic field, namely:%
\begin{equation}
2iC_{s}+C_{xx}+C_{yy}=\left(  x^{2}+y^{2}\right)  C+2i\omega\left(
xC_{y}-yC_{x}\right)  \label{2DOscilMag}%
\end{equation}
(our transformation (\ref{2DLinQuad}) can be thought of as its special case
when $\omega=0).$ An algorithmic derivation is provided in~\cite{notebook}.

A straightforward use of Fock's solutions (\ref{FockLaguerre2D}) does not lead
directly to a new family of spiral beams due to the cancelation of the
crucial parameter~$\omega$ (see Section~\ref{sec:LaguerreGaussian} in the
\textsl{Mathematica} notebook~\cite{notebook}). For example, the solution
\begin{equation}
B(x,y,s)=\frac{e^{-i\left(  m+2n+1\right)  \arctan s}}{\sqrt{s^{2}+1}}%
\exp\left(  -\frac{x^{2}+y^{2}}{2\left(  1+is\right)  }\right)  \left(
\frac{x+iy}{\sqrt{s^{2}+1}}\right)  ^{m}L_{n}^{m}\left(  \frac{x^{2}+y^{2}%
}{s^{2}+1}\right)  ,\quad m\geq0 \label{SpiralSimple}%
\end{equation}
is verified by a direct substitution \cite{notebook}. (A multi-parameter
extension can be obtained by the action of Schr\"{o}dinger's group.)

On the second thought, with the help of (\ref{2DLinQuadRot}), we shall look
for a spiral beam in the form:%
\begin{equation}
B(x,y,s)=\frac{1}{\left(  s^{2}+1\right)  ^{1/2}}\exp\left(  \frac{is\left(
x^{2}+y^{2}\right)  }{2\left(  s^{2}+1\right)  }\right)  C(X,Y,T).
\label{SpiralBeamsExpansion}%
\end{equation}
Here, a familiar eigenfunction expansion \cite{AbramVolUFN},
\cite{AbramVolBook}:%
\begin{equation}
C(X,Y,T) = \sum_{n\geq0} \sum_{m\geq0} c_{n,m}^{(\pm)}\ \mathcal{C}%
_{n,m}^{(\pm)}(X,Y,T) , \label{SpiralBeamExpansionC}%
\end{equation}
in terms of Laguerre-Gaussian modes, must satisfy the auxiliary equation
\eqref{2DOscilMag}. In complex form, $z=1+is,$ $T=\arg z=\arctan s,$ and%
\begin{equation}
Z=X+iY=\frac{x+iy}{\left\vert z\right\vert }e^{-i\omega\arg z},\qquad
X=\operatorname{Re}Z,\quad Y=\operatorname{Im}Z. \label{ComplexXY}%
\end{equation}
Denoting for $m\geq0,$%
\begin{equation}
\mathcal{C}=\mathcal{C}_{n,m}^{(\pm)}(X,Y,T) =e^{-ikT}\left(  X\pm iY\right)
^{m}e^{-\left\vert Z\right\vert ^{2}/2}L_{n}^{m}\left(  \left\vert
Z\right\vert ^{2}\right)  , \label{ComplexLaguerre}%
\end{equation}
we obtain an important \textquotedblleft eigenfunction
identity\textquotedblright:%
\begin{align}
&  2i\mathcal{C}_{T}+\mathcal{C}_{XX}+\mathcal{C}_{YY}-\left(  X^{2}%
+Y^{2}\right)  \mathcal{C}-2i\omega\left(  X\mathcal{C}_{Y}-Y\mathcal{C}%
_{X}\right) \label{EigenValueC}\\
&  \qquad=2\left(  k\pm m\omega-m-2n-1\right)  \mathcal{C}\nonumber
\end{align}
by a direct evaluation \cite{notebook}.

As a result, substituting the series (\ref{SpiralBeamExpansionC}) into
Equation~\eqref{2DOscilMag}, one gets
\[
\sum_{n\geq0}\sum_{m\geq0}c_{n,m}^{(\pm)}\left(  k\pm m\omega-m-2n-1\right)
\mathcal{C}_{n,m}^{(\pm)}(X,Y,T)=0
\]
or, in view of the completeness of the Laguerre-Gaussian modes,%
\begin{equation}
c_{n,m}^{(\pm)}\left(  k\pm m\omega-m-2n-1\right)  =0.
\label{SpiralBeamConditions}%
\end{equation}
Nontrivial solutions of this equation and the corresponding spiral beams are
analyzed in the original works \cite{AbramVolUFN}, \cite{AbramVolBook}. A
multi-parameter extension can be obtained by the action of Schr\"{o}dinger's group.

\subsection{\textquotedblleft Smart\textquotedblright\ Lens Design}
\label{sec:smart}

A lens can be used to focus a laser beam to a small spot, or to produce a beam
of suitable diameter and phase structure for injection into a given optical
device \cite{AhmanNikBook}, \cite{KlineKay65}, \cite{KogelnikLi66}, \cite{Luneburg64},
\cite{VinRudSuxBook79}, \cite{WaltherLENSES}. The
multi-parameter modes under consideration allow one to adapt a required lens
design in paraxial optics to the given field configuration. For instance, in
the co-dimensional $1D$ case, let us consider the Gaussian package (\ref{1DLinNewSols}) when
$n=0.$ We found in Section~\ref{sec:HG} that the focal point is given by%
\begin{equation}
x_{0}=-\frac{2\alpha_{0}\delta_{0}+\beta_{0}^{3}\varepsilon_{0}}{4\alpha
_{0}^{2}+\beta_{0}^{4}},\qquad s_{0}=-\frac{2\alpha_{0}}{4\alpha_{0}^{2}%
+\beta_{0}^{4}}, \label{1Dfocal}%
\end{equation}
say for $\alpha_{0}\geq0.$ In this section, let us consider a lens-like medium
with quadratic refractive index \cite{KogelnikLi66}, as in Equation
(\ref{Schroudinger}), on $(0,l)$ such that our solutions (\ref{AFunction}%
)--(\ref{hhK}) can be used on this interval and the continuity condition holds
at $s=0.$ For the region $s\geq l,$ one can take the Gaussian package in
(\ref{1DLinNewSols}), once again, but with $s\rightarrow s-l,$ $\alpha
_{0}\rightarrow\alpha(l),$ etc. due to Equations (\ref{hhA})--(\ref{hhK}),
which automatically implies the field continuity at $s=l.$ Moreover, critical
points of the intensity inside the lens occur when%
\begin{align}
&  x_{\text{min},\text{max}}=-\frac{\varepsilon_{0}}{\beta_{0}}\cos\left(
s_{\text{min},\text{max}}\right)  +\left(  \delta_{0}-\frac{2\alpha
_{0}\varepsilon_{0}}{\beta_{0}}\right)  \sin\left(  s_{\text{min},\text{max}%
}\right)  ,\label{MaxMin}\\
&  \qquad\tan\left(  2s_{\text{min},\text{max}}\right)  =\frac{4\alpha_{0}%
}{1-4\alpha_{0}^{2}-\beta_{0}^{4}}.\nonumber
\end{align}
Here, $\alpha\left(  s_{\text{min},\text{max}}\right)  =0,$ thus providing a
one-to-one correspondence with the minimum-uncertainty squeezed states in
quantum mechanics \cite{Kr:Sus12}, \cite{KrySusVegaMinimum}. For the length of
lens we choose: $0\leq s_{\text{min}}<l<s_{\text{max}}.$ Then, location of the
beam focal point in the homogeneous medium, after passing through the lens, is
given by%
\begin{equation}
x_{f}=-\frac{2\alpha(l)\delta(l)+\beta^{3}(l)\varepsilon(l)}{4\alpha
^{2}(l)+\beta^{4}(l)},\qquad s_{f}-l=-\frac{2\alpha(l)}{4\alpha^{2}%
(l)+\beta^{4}(l)}. \label{1Dfocalf}%
\end{equation}
As a result, in view of the invariant \cite{KrySusVegaMinimum},%
\begin{equation}
\frac{4\alpha^{2}+\beta^{4}+1}{\beta^{2}}=\frac{4\alpha_{0}^{2}+\beta_{0}%
^{4}+1}{\beta_{0}^{2}}, \label{InvariantLens}%
\end{equation}
one arrives at the following relation between two focal points,%
\begin{align}
1-\frac{2\alpha\delta+\beta^{3}\varepsilon}{x_{f}}  &  =\left(  \frac{\beta
}{\beta_{0}}\right)  ^{2}\left(  1-\frac{2\alpha_{0}\delta_{0}+\beta_{0}%
^{3}\varepsilon_{0}}{x_{0}}\right)  ,\label{XF}\\
1-\frac{2\alpha}{s_{f}-l}  &  =\left(  \frac{\beta}{\beta_{0}}\right)
^{2}\left(  1-\frac{2\alpha_{0}}{s_{0}}\right)  . \label{SF}%
\end{align}
(Here, the corresponding solutions (\ref{hhA})--(\ref{hhK}) are evaluated at
the terminal point of the lens, $s=l.)$ Finally,%
\begin{equation}
\frac{1}{\beta_{0}^{2}}+\frac{1}{r_{0}^{2}}=\frac{1}{\beta^{2}}+\frac{1}%
{r_{f}^{2}}, \label{RF}%
\end{equation}
for the beam radii at the focal points before, $(x_{0},s_{0}),$ and after,
$(x_{f},s_{f}),$ the lens, respectively. According to our analysis, location
of the terminal focal point $(x_{f},s_{f})$ and the corresponding beam radius
$r_{f}$ both depend on the lens length,\ $l,$ which can be adjusted for an
\textquotedblleft optimal control\textquotedblright\ of the beam propagation
through this lens in a certain optical device. For instance, by
(\ref{InvariantLens}) and (\ref{RF}),%
\begin{equation}
\frac{2\beta_{0}^{2}}{r_{f}^{2}}=4\alpha_{0}^{2}+\beta_{0}^{4}+1+\left(
4\alpha_{0}^{2}+\beta_{0}^{4}-1\right)  \cos2l-4\alpha_{0}\sin2l,
\label{minmax}%
\end{equation}
which attains its minimum and maximum values, namely,
\begin{equation}
\text{max},\text{min}\left(  \frac{2\beta_{0}^{2}}{r^{2}}\right)  =4\alpha
_{0}^{2}+\beta_{0}^{4}+1\pm\sqrt{\left(  4\alpha_{0}^{2}+\left(  \beta_{0}%
^{2}+1\right)  ^{2}\right)  \left(  4\alpha_{0}^{2}+\left(  \beta_{0}%
^{2}-1\right)  ^{2}\right)  }, \label{MINmax}%
\end{equation}
at the critical points $l=s_{\text{max},\text{min}},$ respectively (cf.
\cite{KrySusVegaMinimum}). An important ratio,%
\begin{equation}
\frac{r_{\text{min}}^{2}}{r_{\text{max}}^{2}}=\frac{4\alpha_{0}^{2}+\beta
_{0}^{4}+1-\sqrt{\left(  4\alpha_{0}^{2}+\left(  \beta_{0}^{2}+1\right)
^{2}\right)  \left(  4\alpha_{0}^{2}+\left(  \beta_{0}^{2}-1\right)
^{2}\right)  }}{4\alpha_{0}^{2}+\beta_{0}^{4}+1+\sqrt{\left(  4\alpha_{0}%
^{2}+\left(  \beta_{0}^{2}+1\right)  ^{2}\right)  \left(  4\alpha_{0}%
^{2}+\left(  \beta_{0}^{2}-1\right)  ^{2}\right)  }}, \label{SuperFocus}%
\end{equation}
defines the maximum possible compression (or superfocusing in the terminology
of Refs.~\cite{Demkov09}, \cite{DemkovMeyer04}) of the beam inside of a
(sufficiently long) lens.

\textit{A numerical example}. Visualization of the intensity and optical
energy distribution generated by a sample lens, when $\alpha_{0}=2\beta
_{0}=2,$ $s_{\text{min}}=1.33897,$ $l=2.75,$ $s_{\text{max}}=2.90977,$
$x_{0}=x_{f}=0,$ $s_{0}=-0.23529,$ $s_{f}=2.9106,$ and $r_{0}=0.242536,$
$r_{f}=0.239104,$ is given in Figure~\ref{fig.lens}; see also our supplementary material
\cite{notebook} for more details.

\begin{figure}[htbp]
\centering \includegraphics[width=0.84\textwidth]{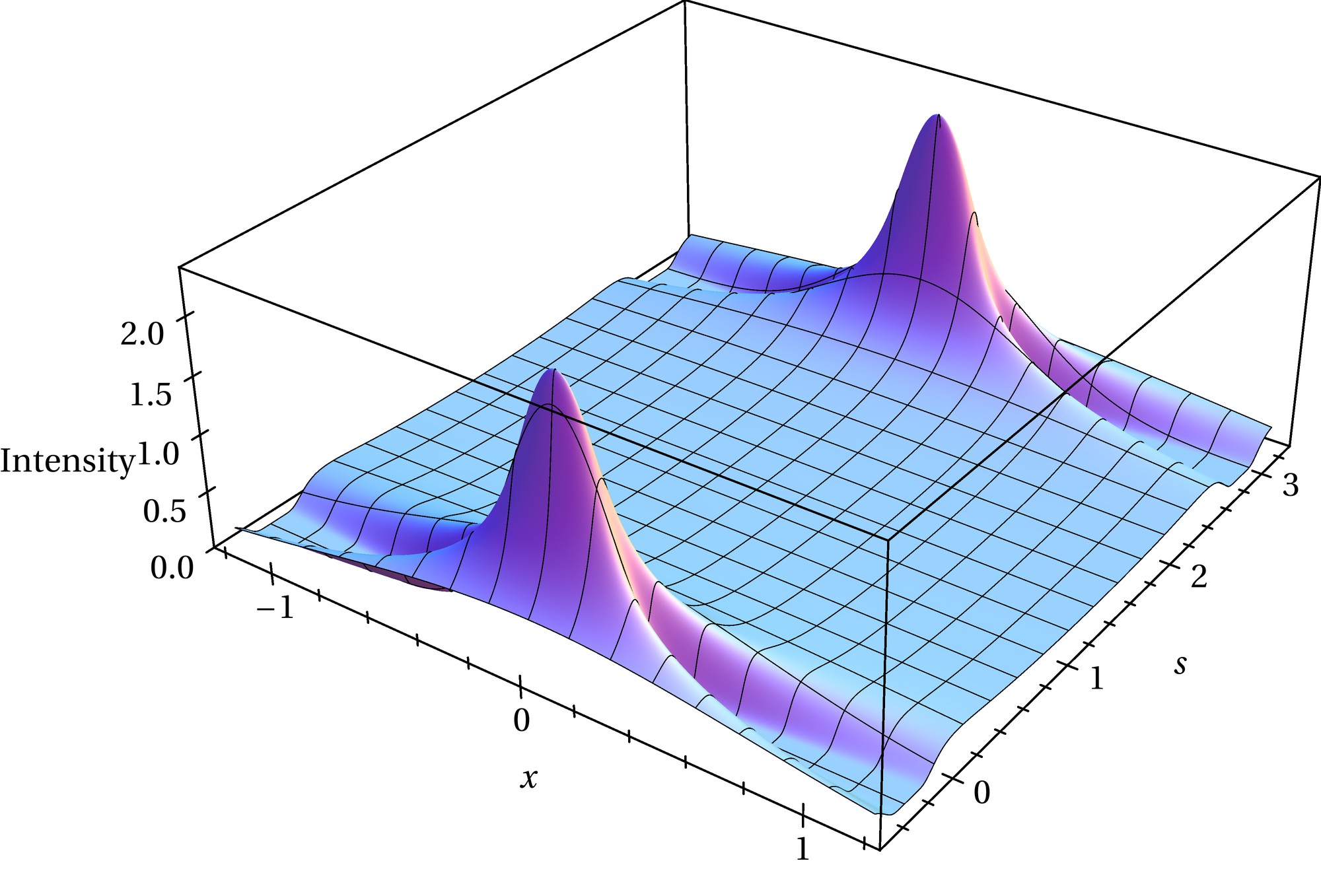}
\caption{Propagation of the beam through the sample lens under consideration.}
\label{fig.lens}
\end{figure}

\subsection{Applications to Quantum Mechanics}

A similar effect of the superfocusing of proton beam in a thin monocrystal
film was discussed in \cite{Demkov09}, \cite{DemkovMeyer04} as certain
dynamical manipulations with the system of particles by transforming high
concentration in the momentum space (collinearity) into the concentration in
the coordinate space (focusing) (validity of the $2D$ harmonic crystal model
had been confirmed by Monte Carlo computer experiments \cite{DemkovMeyer04}).
In the context of quantum mechanics, the corresponding solutions represent the
minimum-uncertainty, or squeezed, states of harmonic oscillators that are
explicitly given in \cite{KrySusVegaMinimum}. Our sample lens from the
previous section can be thought of as a codimension $1D$ model of Demkov's microscope.

Among other quantum mechanical analogs, the minimum-uncertainty squeezed
states for atoms and photons in a cavity, are reviewed in
\cite{KrySusVegaMinimum}; some of them were experimentally realized on cold
trapped atoms \cite{LeibfriedetalWineland03}. It is worth noting that similar
states can be identified for the motion of a quantum particle in a uniform
magnetic field \cite{Fock28-1}.

\subsection{Extensions to Nonlinear Paraxial Optics}

For high-intensity beams, nonlinear medium effects should be taken into
account in the theory of wave propagation. See \cite{Desyatnikovetal10},
\cite{GurevichNLIonosphere}, \cite{KuzTur85}, \cite{Mah:Sus12},
\cite{MahalovSuslov12a}, \cite{Mah:Sus13}, \cite{PonArg06}, \cite{Tao09},
\cite{VinRudSuxBook79}, \cite{VlasovTalanovParabolicEq} and the references
therein for extensions to nonlinear geometrical optics. A generalization of
Lemma~\ref{lem:nonlinear} for combination of certain nonlinear terms is
discussed in \cite{Mah:Sus13} but search for solutions of nonlinear equations
is much more complicated.

In the $1D$ linear case, where nonspreading Airy beams were introduced
\cite{BerryBalazs79} (see also \cite{SiviloglouChris07},
\cite{Siviloglouetal07}), the symmetry of the free Schr\"{o}dinger equation
can be used in order to obtain multi-parameter solutions (\ref{MultiAiry}).
Although the corresponding $1D$ cubic nonlinear Schr\"{o}dinger equation is no
longer preserved under the expansion transformation (but has a similarity
reduction to the second Painlev\'{e} equation \cite{GagWint89},
\cite{GiaJos89}, \cite{Mah:Sus12}, \cite{Smith76}, \cite{Tajiri83}), the same
symmetry holds for the quintic nonlinear Schr\"{o}dinger equation, which is
thus invariant under the action of this group. Here, the blow up, namely a
singularity such that the wave amplitude tends to infinity in a finite time,
occurs (see \cite{MahalovSuslov12a}, \cite{Suslov11}, \cite{Tao09} and the
references therein).

As is well known, a similar symmetry holds for the homogeneous $2D$ cubic
nonlinear Schr\"{o}dinger equation \cite{KuzTur85}, \cite{Talanov70} (in
optics this symmetry is known as Talanov's transformation
\cite{VlasovTalanovParabolicEq}). This is another classical example of the
blow up phenomenon. The stationary $2D$ waveguides in homogeneous quadratic
Kerr media are unstable \cite{KuzTur85}. Under certain conditions,
self-focusing of light beams occurs on a finite distance despite diffraction
spreading. Moreover, for parabolic channels in a monocrystal film, the cubic
nonlinearity may further enhance superfocusing of particle beams predicted in
\cite{Demkov09}, \cite{DemkovMeyer04}. The corresponding inhomogeneous medium
effects deserve a detailed study. An extension to randomly varying media is
also of interest (cf. \cite{AhDyaChirBook81}, \cite{RytovetalPrinciples},
\cite{TangMah13}).

\section{Computer Algebra Methods}
\label{sec:CA}

For an automatic verification of the results presented in this paper, we used
the computer algebra system \textsl{Mathematica}, and in some specific
instances, the \texttt{HolonomicFunctions} package~\cite{Koutschan10b},
written by the first-named author in the frame of his Ph.D.
thesis~\cite{Koutschan09}\footnote{The package can be downloaded from
http:/$\!$/www.risc.jku.at/research/combinat/software/HolonomicFunctions/}.
(See also \cite{KoutschanPauleSuslov14} and the references therein for
applications of the \texttt{HolonomicFunctions} package to relativistic
Coulomb integrals.)

The application of computer algebra in the context of the present paper comes
in three different flavors: The first one employs Gr\"obner bases, the second
one is based on the built-in simplification procedures of \textsl{Mathematica}%
, and the third one is related to the above-mentioned
\texttt{HolonomicFunctions} package.

Gr\"obner bases were introduced in~\cite{Buchberger65} and are a very useful
tool for computations with polynomial ideals. For finding ``nice'' expressions
for the solutions~\eqref{A0}--\eqref{F0} of the Riccati-type system, one can
consider the ideal generated by the (polynomial) equations
\eqref{SysA}--\eqref{SysF}. Equivalence of expressions then corresponds to
equality modulo the ideal. See~\cite{notebook} for more details.

Similarly, we discovered an \textquotedblleft invariant\textquotedblright\ of
the Ermakov-type system. Again using Equations \eqref{SysA}--\eqref{SysF} as
input (but now with $c_{0}=1$) one can use Gr\"{o}bner bases to find relations
that are implied by the given equations. Searching for an equation that does
not involve the parameters $a,b,c,d,f,g$ yields the identity
\begin{equation}
\beta^{2}\kappa^{\prime}-\beta\delta\varepsilon^{\prime}+(\delta^{2}+\beta
^{2}\varepsilon^{2})\gamma^{\prime}=0
\end{equation}
which was missing in the original publications. It reveals that the
differential equations in the Ermakov-type system are in fact dependent. In
particular, Equation~\eqref{SysF} for $\kappa^{\prime}$ can be derived from
the previous equations of this system.

To demonstrate the other two applications, recall the multi-parameter Airy
modes $B(x,s)$ given in Equation~\eqref{MultiAiry}. Thanks to the progress
that computer algebra systems like \textsl{Mathematica} have been made during
the past decades, particularly in dealing with special functions, it can be
directly verified that $B(x,s)$ satisfies the parabolic
equation~\eqref{FreeSchroedinger}: one just inputs the expression given on the
right-hand side of \eqref{MultiAiry} and differentiates it symbolically. Then
the command \texttt{FullSimplify} successfully simplifies the expression
$iB_{s}+B_{xx}$ to~$0$, see the corresponding section in the accompanying
notebook~\cite{notebook}.

The last approach achieves more, and is a bit of an overhead if one only
wanted to verify that $B(x,s)$ satisfies the given differential equation.
Namely, the \texttt{HolonomicFunctions} package computes the set of all
differential equations that a given expression satisfies (more precisely: a
finite basis of this, in general, infinite set). For the multi-parameter Airy
modes, the software computes the following two differential equations:
\begin{align}
(4 \alpha s+1)^{2} B_{s} + 2 p_{1} B_{x} - i p_{2} B  &  = 0,\\
(4 \alpha s+1)^{2} B_{xx} -2 i p_{1} B_{x} - p_{2} B  &  = 0,
\end{align}
where the polynomial coefficients $p_{1}$ and $p_{2}$ are given by
\begin{align}
p_{1}  &  = \delta+4 \alpha\delta s+\beta^{3} s+8 \alpha^{2} s x+2 \alpha x,\\
p_{2}  &  = 2 i \alpha+ \beta^{2} \varepsilon+ \delta^{2}+8 i\alpha^{2} s+4
\alpha^{2} x^{2}+4 \alpha\delta x+ \beta^{3} x,
\end{align}
and where $\alpha=\alpha(0)$ etc. Obviously, the parabolic equation
$iB_{s}+B_{xx}=0$ is just a simple linear combination of the above two
equations. Thus, we again have proved that $B(x,s)$ satisfies $iB_{s}%
+B_{xx}=0$, but even more: the program has found this equation automatically,
starting from the closed form of its solution as the sole input.

Similarly, the remaining formulas in this paper can be verified and/or
derived. For the holonomic systems approach to work, some inputs have to be
transformed into an appropriate format, e.g., the expression given by
\eqref{AFunction}--\eqref{hhK}: holonomic functions are closed under addition,
multiplication, and substitution of algebraic expressions. Since $\sin(s)$ and
$\cos(s)$, which appear in the argument of the Hermite polynomials, are not
algebraic, one may apply the transformation $s\mapsto i\log(z)$ in order to
turn the trigonometric functions into rational functions. More details and all
other computations are contained in the accompanying \textsl{Mathematica}
notebook~\cite{notebook}.

\section{Conclusion}

This work is dedicated to a mathematical description of light propagation in
turbid media and/or through optical systems that are subject to a natural
noisy environment. To this end, we apply concepts of the Fresnel diffraction,
the generalized lens transformation, see Lemma~\ref{lem:nonlinear}, and
computer algebra tools \cite{Koutschan09}, \cite{Koutschan10b},
\cite{Koutschan13} in order to analyze multi-parameter families of certain
propagation-invariant laser beams in codimension $1D$ and $2D$ configurations that are
important in paraxial optics and its applications. Independent proofs of these
results are provided in the supplementary electronic material~\cite{notebook}
along with a computer algebra verification of all related mathematical tools
introduced in the original publications without sufficient details. In
summary, the \textquotedblleft missing\textquotedblright\ multi-parameter
solutions of the paraxial wave equations, that are studied in this article,
allow one to describe all main features of the special laser modes propagation
in a variety of optical systems, in a consistent mathematical way, with the
help of a computer algebra system. In numerical simulations, we have concentrated
on results which are potentially of a practical and academic value, e.g.,
for software development and pedagogy.

\noindent{\textbf{Acknowledgements}}. This research was partially carried out
during our participation in the Summer School on \textquotedblleft
Combinatorics, Geometry and Physics\textquotedblright\ at the Erwin
Schr\"{o}dinger International Institute for Mathematical Physics (ESI),
University of Vienna, in June 2014. We wish to express our gratitude to
Christian Krattenthaler for his hospitality. The first-named author was
supported by the Austrian Science Fund (FWF): W1214, the second-named author
by the Simons
Foundation Grant~\#316295 and by the National Science Foundation Grant
DMS-1440664, and the third-named author by the AFOSR Grant
FA9550-11-1-0220. We are grateful to Eugeny G. Abramochkin, Sergey I. Kryuchkov,
Vladimir I. Man'ko, and Peter Paule for valuable comments and to Miguel A. Bandres for kindly
pointing out the reference \cite{BandresGuiz09} to our attention.
Suggestions from the referees are much appreciated. Last but not least,
we would like to thank Aleksei P. Kiselev for communicating the interesting
articles \cite{Kiselev07}, \cite{KiselevPerel}, \cite{Kiselevetal12}.

\appendix

\section{From Maxwell to Paraxial Wave}

We follow \cite{AbramVolBook} with somewhat different details. In dielectrics
(no free current, no free charge, isotropic, homogeneous, material linear),
the Maxwell equations for the complex electric $\boldsymbol{E}$ and magnetic
$\boldsymbol{H}$ fields for a monochromatic wave varying as $e^{-i\omega t}$
are given by%
\begin{equation}
\operatorname{curl}\boldsymbol{E}=i\frac{\omega}{c}\mu\boldsymbol{H}%
,\qquad\ \ \operatorname{div}\left(  \mu\boldsymbol{H}\right)  =0,
\label{Maxwell12}%
\end{equation}%
\begin{equation}
\operatorname{curl}\boldsymbol{H}=-i\frac{\omega}{c}\varepsilon\boldsymbol{E}%
,\qquad\operatorname{div}\left(  \varepsilon\boldsymbol{E}\right)  =0,
\label{Maxwell34}%
\end{equation}
where $\varepsilon$ is the permittivity and $\mu$ is the permeability of the
material (see, for example, \cite{AhmanNikBook}, \cite{Laxetal75},
\cite{VainshBook}, \cite{VinRudSuxBook79}). Let us consider a
\textquotedblleft polarized\textquotedblright\ wave of the form,%
\begin{equation}
\boldsymbol{E}=f(x,y,z)e^{ikz}\boldsymbol{e}_{x}+g(x,y,z)e^{ikz}%
\boldsymbol{e}_{z},\qquad k^{2}=\varepsilon\mu\frac{\omega^{2}}{c^{2}},
\label{Electric}%
\end{equation}
where $\left\{  \boldsymbol{e}_{x},\boldsymbol{e}_{y},\boldsymbol{e}%
_{z}\right\}  $ are orthonormal vectors in $\left.
%TCIMACRO{\U{211d} }%
%BeginExpansion
\mathbb{R}
%EndExpansion
\right.  ^{3}.$ From the first Equation (\ref{Maxwell12}) one gets:%
\begin{align}
&  i\frac{\omega}{c}\mu\boldsymbol{H}=\operatorname{curl}\boldsymbol{E}%
\label{Magnetic}\\
&  \quad=\frac{\partial g}{\partial y}e^{ikz}\boldsymbol{e}_{x}+\left(
\frac{\partial f}{\partial z}+ikf-\frac{\partial g}{\partial x}\right)
e^{ikz}\boldsymbol{e}_{y}-\frac{\partial f}{\partial y}e^{ikz}\boldsymbol{e}%
_{z}\nonumber
\end{align}
and the second Equation (\ref{Maxwell12}) is automatically satisfied. In
addition, from the second Equation (\ref{Maxwell34}):%
\begin{equation}
\frac{\partial f}{\partial x}+\frac{\partial g}{\partial z}+ikg=0.
\label{Divergence}%
\end{equation}
If $g\equiv0,$ then $f_{x}=0$ and the only transversal\ solution is a plane
wave, $\boldsymbol{E}=e^{ikz}\boldsymbol{e}_{x},$ up to a constant multiple
(cf. \cite{Laxetal75}).

In a similar fashion, from the first Equation (\ref{Maxwell34}) and
(\ref{Magnetic}) we obtain:%
\begin{align}
&  k^{2}\boldsymbol{E}=\operatorname{curl}\left(  i\frac{\omega}{c}%
\mu\boldsymbol{H}\right)  =\left(  k^{2}f+g_{xz}+ikg_{x}-f_{yy}-2ikf_{z}%
-f_{zz}\right)  e^{ikz}\boldsymbol{e}_{x}\label{Prelim}\\
&  +\left(  f_{xy}+g_{yz}+ikg_{y}\right)  e^{ikz}\boldsymbol{e}_{y}+\left(
f_{xz}+ikf_{x}-g_{xx}-g_{yy}\right)  e^{ikz}\boldsymbol{e}_{z}.\nonumber
\end{align}
In view of (\ref{Divergence}), the latter equation can be simplified to%
\begin{align*}
&  k^{2}\boldsymbol{E}=k^{2}\left(  f\boldsymbol{e}_{x}+g\boldsymbol{e}%
_{z}\right)  e^{ikz}\\
&  \quad=\left(  k^{2}f-f_{xx}-f_{yy}-2ikf_{z}-f_{zz}\right)  e^{ikz}%
\boldsymbol{e}_{x}+\left(  k^{2}g-g_{xx}-g_{yy}-2ikg_{z}-g_{zz}\right)
e^{ikz}\boldsymbol{e}_{z}.
\end{align*}
Finally, under the imposed conditions $\left\vert f_{zz}\right\vert
\ll2\left\vert kf_{z}\right\vert $ and $\left\vert g_{zz}\right\vert
\ll2\left\vert kg_{z}\right\vert ,$ we arrive to the paraxial wave equations,%
\begin{equation}
2ik\frac{\partial F}{\partial z}+\frac{\partial^{2}F}{\partial x^{2}}%
+\frac{\partial^{2}F}{\partial y^{2}}=0, \label{ParaxialWaveEqF}%
\end{equation}
for the transversal and longitudinal components, $F=\left\{  f,g\right\}  ,$
of the complex electric field, respectively. (The corresponding magnetic field
can be evaluated by (\ref{Magnetic}).) Once again, these components are
related by (\ref{Divergence}), which implies that%
\begin{equation}
\frac{\partial}{\partial z}\left(  e^{ikz}g\right)  =-f_{x}e^{ikz},
\end{equation}
and, integrating by parts,%
\[
e^{ikz}g=-\int f_{x}e^{ikz}\ dz=-\frac{1}{ik}f_{x}e^{ikz}+\frac{1}{ik}\int
f_{xz}e^{ikz}\ dz\thickapprox-\frac{1}{ik}f_{x}e^{ikz},
\]
provided that $\left\vert k\right\vert \gg1.$ In paraxial approximation, it is
a custom to write%
\begin{equation}
g\thickapprox-\frac{1}{ik}f_{x}=-\frac{1}{ik}\frac{\partial f}{\partial x}
\label{Longitude}%
\end{equation}
for the small longitudinal component of electric field that automatically
satisfies (\ref{ParaxialWaveEqF}). More details can be found in
\cite{AbramVolBook}, \cite{Laxetal75}. A general solution is a superposition
of two waves of the form (\ref{Electric}).

\textit{Note}. The paraxial wave equations (\ref{ParaxialWaveEqF}) for
transversal and longitudinal components, $F=\left\{
f(x,y,z),g(x,y,z)\right\}  ,$ can be solved by the Fresnel integral,%
\begin{equation}
F(x,y,z)=\frac{k}{2\pi i z}\iint_{\mathbb{R}^{2}}\exp\left(  \frac{ik}%
{2z}\left[  (x-\xi)^{2}+(y-\eta)^{2}\right]  \right)  F_{0}(\xi,\eta)\ d\xi
d\eta, \label{FresnelF}%
\end{equation}
subject to proper \textquotedblleft initial\textquotedblright\ data,
$F_{0}=\left\{  f_{0}(x,y),g_{0}(x,y)\right\}  ,$ which are related as
follows,%
\begin{equation}
g_{0}+\frac{1}{ik}\frac{\partial f_{0}}{\partial x}+\frac{1}{2k^{2}}\left(
\frac{\partial^{2}g_{0}}{\partial x^{2}}+\frac{\partial^{2}g_{0}}{\partial
y^{2}}\right)  =0, \label{fgcondition0}%
\end{equation}
in view of \textquotedblleft divergence\textquotedblright\ condition
(\ref{Divergence}). (When $k\gg1,$ one formally gets (\ref{Longitude}).)

In fact, Equation (\ref{fgcondition0}) is the $2D$ inhomogeneous Helmholtz
equation \cite{SommPDE}, \cite{VladPDE}:%
\begin{equation}
\frac{\partial^{2}g_{0}}{\partial x^{2}}+\frac{\partial^{2}g_{0}}{\partial
y^{2}}+2k^{2}g_{0}=2ik\frac{\partial f_{0}}{\partial x}, \label{Helmholtz}%
\end{equation}
which can be solved exactly provided that function $f_{0}(x,y)$ is known.
Under the Sommerfeld radiation condition,%
\begin{equation}
\lim_{r\rightarrow\infty}r^{1/2}\left(  \frac{\partial}{\partial r}-ik\sqrt
{2}\right)  g_{0}\left(  r\boldsymbol{e}\right)  =0,\quad r=\sqrt{x^{2}+y^{2}%
}, \label{SommCond}%
\end{equation}
uniformly in $\boldsymbol{e},$ $\left\vert \boldsymbol{e}\right\vert =1,$ one
gets \cite{Couto13}, \cite{VladPDE}:%
\begin{equation}
g_{0}(x,y)=\frac{k}{2}\iint_{\mathbb{R}^{2}}H_{0}^{(1)}\left(  k\sqrt{2\left[
(x-\zeta)^{2}+(y-\vartheta)^{2}\right]  }\right)  \frac{\partial f_{0}%
}{\partial\zeta}(\zeta,\vartheta)\ d\zeta d\vartheta, \label{HelmholtzHankel}%
\end{equation}
where $H_{0}^{(1)}(z)$ is a Hankel function \cite{NiUvSF}.

\section{From Maxwell to Nonlinear Paraxial Optics}

In a more general case (of a weakly inhomogeneous linear or nonlinear medium
with a complex-valued dielectric permittivity $\varepsilon;$ see, for example,
Refs.~\cite{Fock65} and \cite{VainshBook} for more details),
one can look for solutions of Equations (\ref{Maxwell12})--(\ref{Maxwell34})
as a superposition,%
\begin{equation}
\boldsymbol{E}=\boldsymbol{E}_{x}+\boldsymbol{E}_{y},\label{EGeneral}%
\end{equation}
of two \textquotedblleft polarized\textquotedblright\ waves:%
\begin{align}
\boldsymbol{E}_{x} &  =f(x,y,z)e^{ik(z)}\boldsymbol{e}_{x}+g(x,y,z)e^{ik(z)}%
\boldsymbol{e}_{z},\label{EXY}\\
\boldsymbol{E}_{y} &  =h(x,y,z)e^{ik(z)}\boldsymbol{e}_{y}+l(x,y,z)e^{ik(z)}%
\boldsymbol{e}_{z},\nonumber
\end{align}
where $f,$ $g,$ $h,$ $l,$ and $k$ are some complex-valued functions. In a
similar fashion,%
\begin{align}
&  F_{xx}+F_{yy}+F_{zz}+2ik_{z}F_{z}+\left(  \varepsilon\mu\frac{\omega^{2}%
}{c^{2}}-k_{z}^{2}+ik_{zz}\right)  F\label{FE}\\
&  \quad\quad=-\left\{
\begin{array}
[c]{c}%
\left(  \mathcal{E}+\dfrac{\varepsilon_{z}}{\varepsilon}\left(  g+l\right)
\right)  _{x}\medskip\\
\left(  \mathcal{E}+\dfrac{\varepsilon_{z}}{\varepsilon}\left(  g+l\right)
\right)  _{y}%
\end{array}
\right.  ,\quad\mathcal{E}=\frac{\varepsilon_{x}}{\varepsilon}f+\frac
{\varepsilon_{y}}{\varepsilon}h\nonumber
\end{align}
and%
\begin{align}
&  G_{xx}+G_{yy}+G_{zz}+2ik_{z}G_{z}+\left(  \varepsilon\mu\frac{\omega^{2}%
}{c^{2}}-k_{z}^{2}+ik_{zz}\right)  G\label{GE}\\
&  \qquad\qquad\qquad\qquad=-ik_{z}\left(  \mathcal{E+}\frac{\varepsilon_{z}%
}{\varepsilon}G\right)  -\left(  \mathcal{E+}\frac{\varepsilon_{z}%
}{\varepsilon}G\right)  _{z},\nonumber
\end{align}
where, by definition,%
\begin{equation}
F=\left\{
\begin{array}
[c]{c}%
f\medskip(x,y,z)\\
h(x,y,z)
\end{array}
\right.  ,\qquad G=\left\{
\begin{array}
[c]{c}%
g\medskip(x,y,z)\\
l(x,y,z)
\end{array}
\right.  .\label{FG}%
\end{equation}
Here, it is convenient to rewrite the last equation (\ref{Maxwell34}) as a sum
of two equations:%
\begin{equation}
f_{x}+g_{z}+ik_{z}g+\frac{\varepsilon_{x}}{\varepsilon}f+\frac{\varepsilon
_{z}}{\varepsilon}g=0,\quad h_{y}+l_{z}+ik_{z}l+\frac{\varepsilon_{y}%
}{\varepsilon}h+\frac{\varepsilon_{z}}{\varepsilon}l=0.\label{fghl}%
\end{equation}
We did not impose any conditions yet and Equations (\ref{EGeneral}%
)--(\ref{fghl}) are equivalent to the original Maxwell system
(\ref{Maxwell12})--(\ref{Maxwell34}) under consideration. For paraxial
approximation, we may choose $k_{zz}=0,$ namely, $k(z)=kz,$ where $k$ is a constant.

Let us first consider linear and nonlinear codimension $1D$ cases. When $h=l=f_{y}%
=g_{y}=\varepsilon_{y}=0,$ one can simplify to%
\begin{align}
&  f_{xx}+f_{zz}+2ikf_{z}+\left(  \varepsilon\mu\frac{\omega^{2}}{c^{2}}%
-k^{2}\right)  f+\left(  \frac{\varepsilon_{x}}{\varepsilon}f+\dfrac
{\varepsilon_{z}}{\varepsilon}g\right)  _{x}=0,\label{2fg}\\
&  g_{xx}+g_{zz}+2ikg_{z}+\left(  \varepsilon\mu\frac{\omega^{2}}{c^{2}}%
-k^{2}\right)  g=-ik\left(  \frac{\varepsilon_{x}}{\varepsilon}f\mathcal{+}%
\frac{\varepsilon_{z}}{\varepsilon}g\right)  -\left(  \frac{\varepsilon_{x}%
}{\varepsilon}f\mathcal{+}\frac{\varepsilon_{z}}{\varepsilon}g\right)
_{z},\label{2gf}\\
&  f_{x}+g_{z}+ikg+\frac{\varepsilon_{x}}{\varepsilon}f+\frac{\varepsilon_{z}%
}{\varepsilon}g=0. \label{1fg}%
\end{align}
From the last equation,%
\begin{equation}
f=-\frac{e^{-ikz}}{\varepsilon}\int(\varepsilon ge^{ikz})_{z}\ dx,\qquad
g=-\frac{e^{-ikz}}{\varepsilon}\int e^{ikz}(\varepsilon f)_{x}\ dz.
\label{fgInts}%
\end{equation}
Thus Equations (\ref{2fg}) and (\ref{2gf}) can be thought of as certain
integro-differential equations for complex-valued functions $f$ and $g,$
respectively. Integrating by parts,%
\begin{equation}
g=-\frac{e^{-ikz}}{ik\varepsilon}\int(\varepsilon f)_{x}\ de^{ikz}%
=-\frac{(\varepsilon f)_{x}}{ik\varepsilon}+\frac{e^{-ikz}}{ik\varepsilon}\int
e^{ikz}(\varepsilon f)_{xz}\ dz\approx-\frac{(\varepsilon f)_{x}%
}{ik\varepsilon}. \label{gbyprts}%
\end{equation}
For large $|k|,$ it is also a custom to assume that $\left\vert f_{zz}%
\right\vert \ll2\left\vert kf_{z}\right\vert ,$ $\left\vert f_{zz}\right\vert
\ll2\left\vert kf_{z}\right\vert ,$ and $|g|\ll|f|.$ As a result, one may
concentrate on the study of scalar inhomogeneous paraxial wave equation of the
form:%
\begin{equation}
f_{xx}+2ikf_{z}+\left(  \varepsilon\mu\frac{\omega^{2}}{c^{2}}+\left(
\frac{\varepsilon_{x}}{\varepsilon}\right)  _{x}-k^{2}\right)  f+\frac
{\varepsilon_{x}}{\varepsilon}f_{x}=0. \label{ParaxialWavefEq}%
\end{equation}
In a weakly inhomogeneous nonlinear medium, we expand the (complex-valued)
permittivity $\varepsilon,$
\begin{equation}
\varepsilon(x,z)=\varepsilon_{0}(z)+\varepsilon_{1}(z)x+\varepsilon
_{2}(z)x^{2}+...\left(  +\lambda\left\vert f\right\vert ^{2}+...\right)
\label{1Depsilon}%
\end{equation}
and neglect the higher order terms. In this approximation,%
\begin{equation}
\frac{\varepsilon_{x}}{\varepsilon}=\frac{\varepsilon_{1}}{\varepsilon_{0}%
}+\left[  2\frac{\varepsilon_{2}}{\varepsilon_{0}}-\left(  \frac
{\varepsilon_{1}}{\varepsilon_{0}}\right)  ^{2}\right]  x,\qquad\left(
\frac{\varepsilon_{x}}{\varepsilon}\right)  _{x}=2\frac{\varepsilon_{2}%
}{\varepsilon_{0}}-\left(  \frac{\varepsilon_{1}}{\varepsilon_{0}}\right)
^{2}. \label{DersEpsilon}%
\end{equation}
and one arrives at a form of the paraxial wave equation
(\ref{SchroedingerQuadratic}) (or its nonlinear versions).

The corresponding linear and nonlinear codimension $2D$ cases, when one can concentrate on
a certain dominant component of electric field once again, are similar.
Further details are left to the reader.

\end{document}